\documentclass[pdflatex,sn-mathphys-ay]{sn-jnl}

\usepackage{graphicx}%
\usepackage{multirow}%
\usepackage{amsmath,amssymb,amsfonts}%
\usepackage{amsthm}%
\usepackage{mathrsfs}%
\usepackage[title]{appendix}%
\usepackage{xcolor}%
\usepackage{textcomp}%
\usepackage{manyfoot}%
\usepackage{booktabs}%
\usepackage{algorithm}%
\usepackage{algorithmicx}%
\usepackage{algpseudocode}%
\usepackage{listings}%

\usepackage{setspace}
\usepackage{epsfig}
\usepackage{subfigure}
\usepackage{pifont,bm}
\usepackage{diagbox}
\usepackage{makecell}

\theoremstyle{thmstyleone}%
\newtheorem{proposition}{Proposition}[section]%
\newtheorem{remark}{Remark}[section]%

\begin{document}

\title[Article Title]{Numerical inverse scattering transform for the coupled modified Korteweg-de Vries equation}


\author[1]{\fnm{Wen-Xin} \sur{Zhang}}

\author*[1,2]{\fnm{Yong} \sur{Chen}}\email{ychen@sei.ecnu.edu.cn}

\affil[1]{\orgdiv{School of Mathematical Sciences, Key Laboratory of Mathematics and Engineering Applications (Ministry of Education) and Shanghai Key Laboratory of PMMP}, \orgname{East China Normal University}, \orgaddress{\city{Shanghai}, \postcode{200241}, \country{China}}}

\affil[2]{\orgdiv{College of Mathematics and Systems Science}, \orgname{Shandong University of Science and Technology}, \orgaddress{\city{Qingdao}, \postcode{266590}, \country{China}}}


\abstract{This paper develops the numerical inverse scattering transform (NIST) framework for the coupled modified Korteweg-de Vries (mKdV) equation based on its associated Riemann-Hilbert problem. The coupled system gives rise to a $3\times3$ matrix-valued Riemann-Hilbert problem, whose jump matrix and scattering data have a more involved structure than in the scalar case. This matrix setting makes the extension of NIST to the coupled system nontrivial, both in the direct scattering computation and in the numerical solution of the inverse problem. Within this framework, the scattering data are first computed by solving the matrix direct scattering problem using a Chebyshev collocation method with suitable mappings. The Deift-Zhou nonlinear steepest descent method is then used to analyze and deform the oscillatory Riemann-Hilbert problem. In particular, the phase function admits two stationary points symmetric about the origin, and the analysis leads to a division of the $(x,t)$-plane into three regions with corresponding contour deformations. Compared with traditional numerical methods, the NIST computes the solution directly at prescribed spatial and temporal points without relying on time-stepping. Numerical experiments illustrate the performance of the proposed NIST in long-time simulations and indicate that it captures the main asymptotic features of the coupled mKdV solutions.}

\keywords{Coupled modified Korteweg-de Vries equation, Numerical inverse scattering transform, Riemann-Hilbert problem, Numerical method}



\maketitle

\section{Introduction}\label{sec1}

Among nonlinear evolution equations, integrable models have attracted sustained attention due to their rich mathematical structures and important physical applications. The modified Korteweg-de Vries (mKdV) equation \citep{Miura1968,Wadati1973} is an important completely integrable nonlinear equation with the following form
\begin{equation}\label{mKdV}
  u_t+u_{xxx}-6\alpha u^2u_x=0,
\end{equation}
which describes a variety of complex phenomena in mathematical physics, such as internal ocean waves \citep{Grimshaw1997}, ultra-short pulses in nonlinear optics \citep{Mel’nikov1997, Leblond2018}, Alfv\'{e}n waves \citep{Khater1998}, anharmonic lattices \citep{Ono1992}, fluid mechanics \citep{Helal2002}, and so on. In addition, the generalization of the mKdV equation to multi-component or matrix systems has been considered in many works \citep{Yajima1975,Athorne1987,Hirota1997}, and one typical extension is the coupled mKdV equation. \cite{Iwao1997} proposed the coupled mKdV equation and constructed its multi-soliton solutions via the Hirota bilinear method. Moreover, the coupled mKdV equation is studied by means of the inverse scattering method \citep{Tsuchida1998,Wu2017,Ma2019}, the Darboux transformation method \citep{Xue2015,Hu2009}, the Fokas method \citep{Tian2018}, the Riemann-Hilbert approach \citep{Ma2018,Xu2023}, and the tanh-function method \citep{Fan2001}. It is also shown that the coupled mKdV equation possesses infinitely many conservation laws \citep{Tsuchida1998,Xue2015}, while its algebro-geometric structure \citep{Geng2014} and the nonlinear stability of its $N$-soliton solutions \citep{Ling2025} are also analyzed.

The inverse scattering transform \citep{Gardner1967,Ablowitz1974,Ablowitz1981} plays an important role in the analysis of integrable nonlinear evolution equations, especially in the study of their initial value problems and soliton dynamics. Within the inverse scattering transform framework, the inverse problem can be reformulated as a Riemann-Hilbert problem \citep{Novikov1984,Yang2010}, thereby providing a more direct and effective approach for studying solutions and their long-time asymptotic behavior in integrable systems. For long-time asymptotic analysis, the Deift-Zhou nonlinear steepest descent method \citep{Deift1993,Deift1994} provides a powerful tool for oscillatory Riemann-Hilbert problems. Through suitable contour deformation, the original Riemann-Hilbert problem is transformed into a form suitable for asymptotic analysis. The Deift-Zhou nonlinear steepest descent method has been successfully applied to a variety of integrable equations, including the mKdV equation \citep{Deift1993}, the KdV equation \citep{Grunert2009}, the nonlinear Schr\"{o}dinger (NLS) equation \citep{Kamvissis1996,Buckingham2007}, the derivative NLS equation \citep{Xu2013}, the sine-Gordon equation \citep{Cheng1999}, the Camassa-Holm equation \citep{Monvel2009}, and the Kundu-Eckhaus equation \citep{Wang2018}. In particular, the long-time asymptotics for the Cauchy problem of the coupled mKdV equation are derived in \cite{Geng2019}, and those for the three-component coupled mKdV system are also investigated in \cite{Ma2019-2}.

Alongside these analytical developments, the numerical computation of solutions and their long-time behavior for integrable equations has also attracted considerable attention. Traditional numerical methods are typically implemented through time-stepping, making it difficult to resolve the long-time asymptotic behavior accurately. The numerical inverse scattering transform (NIST), introduced by \cite{Trogdon2012}, provides an effective framework for computing solutions to integrable equations, especially in long-time regimes. A notable advantage of the NIST is that it computes the solution directly at any given $(x,t)$ without relying on spatial discretization or time-stepping procedures \citep{Olver2011,Olver2012,Trogdon2016}. This framework has been successfully applied to the Cauchy initial-value problem for the KdV and mKdV equations, capturing dispersive wave structures and asymptotic solution behavior \citep{Trogdon2012,Trogdon2013,Bilman2020}. Beyond the KdV and mKdV equations, the NIST has also been extended to a range of other integrable systems, including the focusing and defocusing NLS equations \citep{Trogdon2013-2,Gkogkou2026}, the derivative NLS equation \citep{Cui2024}, the coupled NLS equation \citep{Zhang2025}, the Kundu-Eckhaus equation \citep{Cui2023}, the sine-Gordon equation \citep{Deconinck2019}, and the Toda lattice \citep{Bilman2017}.

In this paper, we consider the coupled mKdV equation \citep{Iwao1997} with initial data in the Schwartz space $S(\mathbb{R})=\left\{ f\in\mathbb{C}^{\infty}(\mathbb{R}),\|f\|_{\alpha,\beta}=\sup\left|x^{\alpha}\partial^{\beta}f(x)\right|<\infty,\ \alpha,\beta\in\mathbb{Z}_+\right\}$, which is formulated as
\begin{gather}\label{cmKdV}
\left\{\begin{array}{l}
u_t+u_{xxx}-3\alpha(u_xv^2+uvv_x+2u^2u_x)=0,\\
v_t+v_{xxx}-3\alpha(u^2v_x+uu_xv+2v^2v_x)=0,\\
u(x,0)=u_0(x),\ v(x,0)=v_0(x),
\end{array}\right.
\end{gather}
where the subscripts $t$ and $x$ denote the partial derivatives of the real-valued functions $u=u(x,t)$ and $v=v(x,t)$. By introducing the complex-valued function $q=u+i v$, the above coupled system can be rewritten as
\begin{equation}
  q_t+q_{xxx}-3\alpha|q|^2q_x-\frac{3\alpha}{2}q(|q|^2)_x=0.\notag
\end{equation}
The parameter $\alpha=\pm1$ determines the sign of the nonlinear interaction. In particular, by setting $v\equiv0$, the coupled system (\ref{cmKdV}) reduces to the scalar mKdV equation (\ref{mKdV}). Under the usual sign convention for (\ref{mKdV}), $\alpha=-1$ and $\alpha=1$ correspond to the focusing and defocusing cases, respectively.

To study the coupled mKdV equation (\ref{cmKdV}), we develop the NIST framework adapted to its $3\times3$ matrix spectral problem. This matrix spectral setting gives rise to a $3\times3$ scattering matrix and a Riemann-Hilbert problem with a $3\times3$ jump matrix. As a result, the computation of the scattering data, the factorization and deformation of the jump matrix, and the numerical solution of the inverse problem become substantially more involved. Starting from the Lax pair, we derive the symmetry properties of the scattering matrix and construct the associated Riemann-Hilbert problem through spectral analysis. In the direct scattering problem, the scattering data, including the vector-valued reflection coefficients and eigenvalues, are computed numerically with high accuracy using the Chebyshev collocation method together with appropriate mapping functions. The stability and convergence of this numerical direct scattering scheme are also analyzed. To handle the oscillatory Riemann-Hilbert problem, suitable factorizations of the jump matrix and contour deformations are introduced by means of the Deift-Zhou nonlinear steepest descent method. In particular, the analysis leads to a division of the $(x,t)$-plane into three regions, for which different contour deformation strategies are constructed. The resulting deformed Riemann-Hilbert problems are then solved numerically within the proposed NIST framework, revealing nontrivial wave patterns of the coupled mKdV equation. Numerical comparisons with traditional time-stepping methods illustrate the performance of the proposed NIST in computing long-time evolution and capturing the essential asymptotic structure of the solutions.

This paper is organized as follows. In Section \ref{sec2}, for the coupled mKdV equation, the spectral analysis is carried out and the associated Riemann-Hilbert problem is formulated. In Section \ref{sec3}, the numerical direct scattering is performed to compute the scattering data with high accuracy. In Section \ref{sec4}, the oscillatory Riemann-Hilbert problem is analyzed and deformed so that the oscillatory terms become exponentially decaying contributions. In Section \ref{sec5}, the NIST is implemented to solve the Riemann-Hilbert problem numerically, especially in the long-time regime. Finally, conclusions are given in Section \ref{sec6}.

\section{Construction of the Riemann-Hilbert problem}\label{sec2}

In this section, the Riemann-Hilbert problem associated with the coupled mKdV equation is formulated based on the spectral analysis of its Lax pair and the corresponding jump condition. We start the study from the Lax pair \citep{Tsuchida1998} of the coupled mKdV equation which is completely integrable
\begin{gather}\label{Lax}
\left\{\begin{array}{l}
Y_x-i \alpha k\Lambda Y=QY,\\
Y_t-4i \alpha k^3\Lambda Y=\widetilde{Q}Y,
\end{array}\right.
\end{gather}
where $Y=Y(x,t;k)$ is the eigenfunction concerning the spectral parameter $k$, and
\begin{gather}
\Lambda=\operatorname{diag}(-1,1,1),\notag\\
Q=\left(\begin{array}{ccc}
0 & \alpha u & \alpha v \\
u & 0 & 0\\
v & 0 & 0\\
\end{array}\right),\notag\\
\widetilde{Q}=4k^2Q+2i \alpha k \Lambda(Q^2-Q_x)-Q_{xx}+Q_xQ-QQ_x+2Q^3.\notag
\end{gather}
Then, the asymptotic property of the eigenfunction $Y(x,t;k)$ is given as
\begin{equation}
Y(x,t;k)\rightarrow e^{i \alpha k\Lambda x+4i \alpha k^3\Lambda t}, \quad |x|\rightarrow\infty.\notag
\end{equation}
We consider taking the transformation
\begin{equation}\label{YJtransform}
Y(x,t;k)=J(x,t;k)e^{i \alpha k\Lambda x+4i \alpha k^3\Lambda t},
\end{equation}
where the new matrix function $J(x,t;k)$ satisfies the asymptotic property $J(x,t;k)\rightarrow I$ when $x$ approaches infinity. Combining the transformation (\ref{YJtransform}) with the Lax pair (\ref{Lax}), $J(x,t;k)$ admits the following spatial and temporal matrix spectral problems
\begin{gather}\label{JLax}
\left\{\begin{array}{l}
J_x-i \alpha k[\Lambda,J]=QJ,\\
J_t-4i \alpha k^3[\Lambda,J]=\widetilde{Q}J,
\end{array}\right.
\end{gather}
where $[\Lambda,J]=\Lambda J-J\Lambda$. If $J_{1,2}(x,t;k)$ are the matrix Jost solutions of the above partial differential system (\ref{JLax}), then they can be uniquely determined by the following Volterra integral equations
\begin{gather}
J_1(x,t;k)=I+\int_{-\infty}^xe^{i\alpha k(x-s)\Lambda} Q(s)J_1(s,t;k)e^{-i\alpha k(x-s)\Lambda}ds,\notag\\
J_2(x,t;k)=I-\int_x^{+\infty}e^{i\alpha k(x-s)\Lambda} Q(s)J_2(s,t;k)e^{-i\alpha k(x-s)\Lambda}ds.\notag\
\end{gather}

\begin{proposition}[Analyticity]
Introducing the definitions $J_1=(J_{11},J_{12},J_{13})$ and $J_{2}=(J_{21},J_{22},J_{23})$, these three-dimensional column vectors in matrix Jost solutions $J_1(x,t;k)$ and $J_2(x,t;k)$ possess specific analytical properties
\begin{itemize}
\item $\alpha=1$: the $J_{11}$, $J_{22}$ and $J_{23}$ are bounded and analytic in the upper half-plane $\mathbb{C}_+=\{k\in \mathbb{C}| \operatorname{Im}(k)>0\}$, while the $J_{12}$, $J_{13}$ and $J_{21}$ are bounded and analytic in the lower half-plane $\mathbb{C}_-=\{k\in \mathbb{C}| \operatorname{Im}(k)<0\}$;
\item $\alpha=-1$: the $J_{11}$, $J_{22}$ and $J_{23}$ are bounded and analytic in the lower half-plane $\mathbb{C}_-=\{k\in \mathbb{C}| \operatorname{Im}(k)<0\}$, while the $J_{12}$, $J_{13}$ and $J_{21}$ are bounded and analytic in the upper half-plane $\mathbb{C}_+=\{k\in \mathbb{C}| \operatorname{Im}(k)>0\}$.
\end{itemize}
\end{proposition}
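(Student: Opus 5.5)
Write the Volterra equations column by column and look at the exponential factors. Since $\Lambda=\operatorname{diag}(-1,1,1)$, the $(l,m)$ entry of $e^{i\alpha k(x-s)\Lambda}A\,e^{-i\alpha k(x-s)\Lambda}$ equals $e^{i\alpha k(x-s)(\lambda_l-\lambda_m)}A_{lm}$. Here $\lambda_l-\lambda_m\in\{0,\pm2\}$.

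For the first column ($m=1$, $\lambda_1=-1$) the factors are $1$ for $l=1$ and $e^{2i\alpha k(x-s)}$ for $l=2,3$. For the second and third columns ($m=2,3$, $\lambda_m=1$) the factors are $e^{-2i\alpha k(x-s)}$ for $l=1$ and $1$ for $l=2,3$.

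For $J_1$ the integration variable satisfies $x-s>0$. Then $|e^{2i\alpha k(x-s)}|=e^{-2\alpha\operatorname{Im}k\,(x-s)}$ is bounded by $1$ exactly when $\alpha\operatorname{Im}k\ge0$. So for $\alpha=1$ the column $J_{11}$ is governed by kernels bounded on $\overline{\mathbb C_+}$, and $J_{12},J_{13}$ by kernels bounded on $\overline{\mathbb C_-}$. For $J_2$ we have $x-s<0$, and the roles reverse: $J_{21}$ lives in $\mathbb C_-$, while $J_{22},J_{23}$ live in $\mathbb C_+$. Replacing $\alpha=1$ by $\alpha=-1$ flips the sign of $\alpha\operatorname{Im}k$, which swaps the half-planes and gives the second bullet. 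The key point to record is that each column satisfies its own closed Volterra equation, because $Q(s)J(s)$ acting on a column only involves that column of $J$.

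Next I prove existence, boundedness and analyticity for a typical column, say $\mu=J_{11}$ with $\alpha=1$. It satisfies $\mu(x)=e_1+\int_{-\infty}^x K(x,s;k)Q(s)\mu(s)\,ds$, where $K=\operatorname{diag}(1,e^{2ik(x-s)},e^{2ik(x-s)})$ and $\|K\|\le1$ for $\operatorname{Im}k\ge0$. I set up the Neumann series $\mu=\sum_n\mu^{(n)}$ with $\mu^{(0)}=e_1$ and $\mu^{(n+1)}(x)=\int_{-\infty}^xKQ\mu^{(n)}ds$. By induction on the ordered simplex,
$$|\mu^{(n)}(x)|\le\frac{1}{n!}\Big(\int_{-\infty}^x\|Q(s)\|ds\Big)^n,$$
so the series converges uniformly in $x$ and in $k\in\overline{\mathbb C_+}$, and $|\mu|\le e^{\|Q\|_{L^1}}$. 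This bound is finite since $u,v$ are Schwartz and, for fixed $t$, remain in the Schwartz class along the flow. Each $\mu^{(n)}$ is analytic in $k\in\mathbb C_+$, by differentiation under the integral sign: $\partial_k$ of the kernel brings down a factor $2i(x-s)$, and this is dominated by $|x-s|e^{-2\operatorname{Im}k(x-s)}$, which is bounded on compact subsets of the open half-plane. Uniform limits of analytic functions are analytic, which gives analyticity in $\mathbb C_+$ and continuity up to $\mathbb R$. Uniqueness follows from the same Gronwall-type estimate applied to the difference of two solutions. The other five columns are handled identically, using the corresponding kernel sign.

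The main obstacle is bookkeeping rather than analysis. I must make sure the conjugation signs are computed correctly: the sign of $\lambda_l-\lambda_m$, combined with the sign of $x-s$ for $J_1$ versus $J_2$. This fixes which half-plane each column belongs to, and a sign slip would swap the conclusions. I would therefore double-check with the scalar reduction $v\equiv0$, where the result must agree with the known mKdV Jost analyticity. A second subtle point is boundedness as $|k|\to\infty$ within the half-plane. This is covered because the Neumann bound is uniform in $k$, so no extra decay estimate is needed for the stated boundedness.
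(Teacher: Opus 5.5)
Your proof is correct. The bookkeeping of $\lambda_l-\lambda_m\in\{0,\pm2\}$ against the sign of $x-s$ reproduces exactly the half-plane assignments in both bullets. The column-wise Neumann series with the bound $\frac{1}{n!}\bigl(\int_{-\infty}^{x}\|Q(s)\|\,ds\bigr)^n$ then gives existence, uniqueness, boundedness uniform in $k$ on the closed half-plane, and analyticity in the open one.

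The paper states this proposition without proof and only writes down the Volterra equations for $J_1,J_2$. Your argument is the standard justification the paper leaves implicit. One detail should be tightened: for $n\ge2$, the $k$-derivative of $\mu^{(n)}$ also involves $\partial_k\mu^{(n-1)}$. Handle this by induction, or by applying Morera's theorem to the iterated integral over the simplex.
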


Since both the matrix Jost solutions $J_1(x,t;k)$ and $J_2(x,t;k)$ satisfy the spatial and temporal matrix spectral problems (\ref{JLax}), then they have the following linear relationship
\begin{equation}\label{S}
J_{1}(x,t;k)=J_{2}(x,t;k)e^{i \alpha k(x+4k^2t)\widehat{\Lambda}}S(k),
\end{equation}
where $S(k)$ is a $3\times3$-dimensional scattering matrix, and $S(k)=[s_{ij}]_{3\times3}$.

\begin{proposition}[Symmetry]
There exist some symmetrical properties of the eigenfunctions $J_{1,2}(x,t;k)$ and the scattering matrix $S(k)$
\begin{itemize}
\item $J_i^{-1}(x,t;k)=\Sigma\overline{J_i(x,t;\bar{k})}^{\top}\Sigma^{-1}$, $i=1,2$,
\item $S^{-1}(k)=\Sigma\overline{S(\bar{k})}^{\top}\Sigma^{-1}$,
\end{itemize}
where notations $-$ and $\top$ respectively indicate the conjugate and the transpose of the corresponding matrix function, and $\Sigma=\operatorname{diag}(-1,\frac{1}{\alpha},\frac{1}{\alpha})$.
\end{proposition}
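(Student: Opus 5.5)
The plan is to prove the symmetry of $J_i$ first, by showing that $\Sigma\overline{J_i(x,t;\bar k)}^{\top}\Sigma^{-1}$ and $J_i^{-1}(x,t;k)$ solve the same linear problem with the same normalization at $x\to\mp\infty$. The symmetry of $S$ then follows from the scattering relation (\ref{S}). The whole argument rests on two algebraic facts about the coefficient matrices:
\begin{equation}
\Sigma\,\overline{Q}^{\top}\Sigma^{-1}=-Q,\qquad \Sigma\Lambda\Sigma^{-1}=\Lambda.\notag
\end{equation}
The second holds because $\Sigma$ and $\Lambda$ are both diagonal. For the first, note that $u,v,\alpha$ are real, so $\overline{Q}^{\top}=Q^{\top}$ has first row $(0,u,v)$ and first column $(0,\alpha u,\alpha v)^{\top}$. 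Conjugating by $\Sigma=\operatorname{diag}(-1,1/\alpha,1/\alpha)$ multiplies the $(1,j)$ entries by $-\alpha$ and the $(j,1)$ entries by $-1/\alpha$ for $j=2,3$. Since $\alpha^2=1$, this gives exactly $-Q$. I would also record the identity for the temporal coefficient, $\Sigma\overline{\widetilde Q(\bar k)}^{\top}\Sigma^{-1}=-\widetilde Q(k)$. This is checked term by term from the formula for $\widetilde Q$, using $\overline{\bar k}=k$, $\Lambda^{\top}=\Lambda$, and the fact that transposition reverses products. For example, $\overline{(2i\alpha\bar k\Lambda(Q^2-Q_x))}^{\top}=-2i\alpha k(Q^2-Q_x)^{\top}\Lambda$, which after conjugation becomes $-2i\alpha k(Q^2+Q_x)\Lambda$. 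This term has to be matched against $-2i\alpha k\Lambda(Q^2-Q_x)$. That match works because $Q$ anticommutes with $\Lambda$, so $Q^2$ commutes and $Q_x$ anticommutes.

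Next, set $K(x;k)=\Sigma\overline{J_i(x;\bar k)}^{\top}\Sigma^{-1}$. Take the conjugate transpose of $J_x-i\alpha\bar k[\Lambda,J]=QJ$ evaluated at $\bar k$. Using the identities above, this yields
\begin{equation}
K_x-i\alpha k[\Lambda,K]=-KQ.\notag
\end{equation}
On the other hand, $M=J_i^{-1}(x;k)$ satisfies the same equation, because differentiating $JM=I$ gives $M_x=-MJ_xJ^{-1}$ and the commutator term transforms consistently. Both $K$ and $M$ tend to $I$ as $x\to-\infty$ when $i=1$, and as $x\to+\infty$ when $i=2$.

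To conclude $K=M$, I will use the Volterra integral equation. I will write the adjoint Volterra equation satisfied by $J_i^{-1}$, namely $M=I-\int e^{i\alpha k(x-s)\Lambda}M(s)Q(s)e^{-i\alpha k(x-s)\Lambda}ds$, and show that $K$ satisfies it too by conjugating the equation for $J_i$. Uniqueness of solutions to Volterra equations then gives $K=M$ for real $k$, and analytic continuation extends it to the columnwise domains of the analyticity proposition. The time dependence is handled by the $\widetilde Q$ identity, or simply by noting that the identity holds for each fixed $t$.

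For $S$, I will invert (\ref{S}): $S(k)=e^{-i\alpha k(x+4k^2t)\widehat\Lambda}\big(J_2^{-1}J_1\big)$. Apply $\Sigma\overline{(\cdot)(\bar k)}^{\top}\Sigma^{-1}$ to this. The exponential factor $e^{\theta\Lambda}$, with $\theta=i\alpha k(x+4k^2t)$, is mapped to $e^{-\theta\Lambda}$ because $\bar{\bar k}=k$ and $x,t$ are real, and $\Sigma$ commutes with it. Therefore
\begin{equation}
\Sigma\overline{S(\bar k)}^{\top}\Sigma^{-1}=e^{-\theta\widehat\Lambda}\big(J_1^{-1}J_2\big)=S^{-1}(k).\notag
\end{equation}

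The main obstacle is bookkeeping rather than an idea. The sign and ordering of the commutator term under conjugate-transpose must come out right, and the $\widetilde Q$ identity has the most terms to track. The other delicate point is the domain statement. I will state the identity on the real axis and extend it columnwise only where the analyticity proposition allows.
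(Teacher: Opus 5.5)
Your proposal is correct and follows the paper's argument. The paper also uses $Q=-\Sigma Q^{\dagger}\Sigma^{-1}$ to show that $\Sigma\overline{J(x,t;\bar k)}^{\top}\Sigma^{-1}$ satisfies $K_x-i\alpha k[\Lambda,K]=-KQ$, the same equation as $J^{-1}$, and then reads off the symmetry of $S$ from (\ref{S}); your Volterra uniqueness step, the $\widetilde Q$ identity, and the explicit computation for $S$ fill in details the paper leaves implicit (your remark that $\alpha^2=1$ is needed for $\Sigma\overline{Q}^{\top}\Sigma^{-1}=-Q$ is harmless but unnecessary).
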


\begin{proof}
Combining with the matrix $\Sigma$, the matrix function $Q(x,t)$ satisfies
\begin{equation}
  Q(x,t)=-\Sigma Q^{\dag}(x,t)\Sigma^{-1},\notag
\end{equation}
where the notation $\dag$ represents the Hermitian conjugate of the matrix function $Q$. We replace $k$ with $\bar{k}$ and perform the conjugate transpose of the spatial matrix spectral problem in (\ref{JLax})
\begin{equation}\label{eq1}
\overline{J_x(x,t;\bar{k})}^{\top}+i \alpha k\left[\overline{J(x,t;\bar{k})}^{\top},\Lambda\right]=\overline{J(x,t;\bar{k})}^{\top}Q^{\dag}.
\end{equation}
Then, we multiply $\Sigma$ and $\Sigma^{-1}$ on both sides of the above equation (\ref{eq1}) respectively
\begin{equation}\label{eq2}
\left(\Sigma\overline{J(x,t;\bar{k})}^{\top}\Sigma^{-1}\right)_x-i \alpha k\left[\Lambda, \Sigma\overline{J(x,t;\bar{k})}^{\top}\Sigma^{-1}\right]=-\Sigma\overline{J(x,t;\bar{k})}^{\top}\Sigma^{-1}Q.
\end{equation}
We observe that the equation satisfied by $J^{-1}(x,t;k)$ has the same form as the above equation (\ref{eq2}), thus
\begin{equation}
J^{-1}(x,t;k)=\Sigma\overline{J(x,t;\bar{k})}^{\top}\Sigma^{-1},\notag
\end{equation}
which proves the first symmetrical property. According to the linear relationship (\ref{S}) between the eigenfunctions $J_{1}(x,t;k)$ and $J_{2}(x,t;k)$, the scattering matrix $S(k)$ satisfies the second symmetrical property.
\end{proof}

\begin{proposition}[Reconstruction of solutions]
Suppose that the matrix function $J(x,t;k)$ admits the large-$k$ expansion
\begin{equation}\label{expansion}
  J(x,t;k)=J^{(0)}(x,t)+\frac{J^{(1)}(x,t)}{k}+\frac{J^{(2)}(x,t)}{k^2}+O(k^{-3}),\ k\rightarrow\infty,
\end{equation}
thus the potentials $u(x,t)$ and $v(x,t)$ can be recovered from the matrix function $J(x,t;k)$ by
\begin{gather}\label{sol}
\begin{split}
  u(x,t)=2i \lim\limits_{k\to\infty}k J_{12}(x,t;k), \\
  v(x,t)=2i \lim\limits_{k\to\infty}k J_{13}(x,t;k),
\end{split}
\end{gather}
where the subscript 'ij' stands for the element in row i and column j of the corresponding matrix.
\end{proposition}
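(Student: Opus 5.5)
The plan is to substitute the large-$k$ expansion (\ref{expansion}) into the spatial part of the modified Lax pair (\ref{JLax}),
\begin{equation}
J_x-i\alpha k[\Lambda,J]=QJ,\notag
\end{equation}
and match powers of $k$. Here $J$ is the sectionally analytic matrix function built from the columns of $J_1,J_2$ given by the Analyticity proposition. Each such column solves the same $x$-equation, so the matched identities hold for $J$ in either half-plane. I will also use that $J\to I$ as $k\to\infty$; this follows from the Volterra equations, or equivalently from the normalization of the Riemann-Hilbert problem. Hence $J^{(0)}=I$.

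I will match the orders in turn.
\begin{itemize}
\item At order $k^{1}$ the equation gives $[\Lambda,J^{(0)}]=0$. This is consistent with $J^{(0)}=I$ and forces $J^{(0)}$ to be block diagonal with respect to the splitting $1\oplus 2$ induced by $\Lambda=\operatorname{diag}(-1,1,1)$.
\item At order $k^{0}$ the equation gives
\begin{equation}
J^{(0)}_x-i\alpha[\Lambda,J^{(1)}]=QJ^{(0)}.\notag
\end{equation}
With $J^{(0)}=I$ the diagonal blocks give $0=0$. For $(i,j)\in\{(1,2),(1,3)\}$ we have $[\Lambda,J^{(1)}]_{1j}=(\Lambda_{11}-\Lambda_{jj})J^{(1)}_{1j}=-2J^{(1)}_{1j}$, so the off-diagonal entries read $2i\alpha J^{(1)}_{12}=Q_{12}=\alpha u$ and $2i\alpha J^{(1)}_{13}=\alpha v$.
\end{itemize}
Since $\alpha=\pm1\neq0$, this yields $u=2iJ^{(1)}_{12}$ and $v=2iJ^{(1)}_{13}$. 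Finally, $J^{(1)}_{1j}=\lim_{k\to\infty}k\,J_{1j}$ because $J^{(0)}_{1j}=0$ for $j=2,3$, which gives (\ref{sol}).

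As a consistency check, I will also read off the $(2,1)$ and $(3,1)$ entries. There $[\Lambda,J^{(1)}]_{j1}=2J^{(1)}_{j1}$, so $u=-2i\alpha J^{(1)}_{21}$ and $v=-2i\alpha J^{(1)}_{31}$. Combined with the Symmetry proposition, this should reproduce the same potentials, confirming that $u,v$ are real. The order-$k^{-1}$ equation, $J^{(1)}_x-i\alpha[\Lambda,J^{(2)}]=QJ^{(1)}$, is not needed for the reconstruction. It only determines the diagonal blocks of $J^{(1)}$ in terms of $|q|^2$, and I will note it just for completeness.

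The main obstacle is the justification rather than the algebra: we need the asymptotic expansion in $k$ to hold with $x$-differentiable coefficients, uniformly enough that $\partial_x$ commutes with the expansion, and we need $J^{(0)}=I$. For Schwartz-class $u,v$, I would get both by repeated integration by parts in the Volterra integral equations. This produces the expansion of the analytic columns with coefficients expressed through $Q$ and its derivatives, and it gives the leading term $I$. The proposition, however, assumes the expansion as a hypothesis, so in the proof I would take it as given and carry out only the matching argument above.
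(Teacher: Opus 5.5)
Your proposal is correct and follows essentially the paper's argument: you substitute the expansion into $J_x-i\alpha k[\Lambda,J]=QJ$, use $J^{(0)}=I$, and read off $Q=-i\alpha[\Lambda,J^{(1)}]$ at $O(1)$, which gives $u=2iJ^{(1)}_{12}$ and $v=2iJ^{(1)}_{13}$. The one difference is how $J^{(0)}=I$ is justified: the paper first gets $J^{(0)}_x=0$ from the $O(k^2)$ term of the temporal equation and then uses the $|x|\to\infty$ normalization, whereas you take $J^{(0)}=I$ directly from the $k\to\infty$ normalization (the block-diagonal part of the $O(1)$ spatial equation alone would also give $J^{(0)}_x=0$, so the temporal equation is not actually needed).
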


\begin{proof}
Substituting the asymptotic expansion (\ref{expansion}) into the spatial part of the Lax pair (\ref{JLax}) of the matrix function $J(x,t;k)$, and comparing powers of $k$, we obtain at orders $O(k)$ and $O(1)$ as
\begin{gather}
  i \alpha \left[\Lambda,J^{(0)}\right]=0,\notag \\
  J^{(0)}_x=i \alpha\left[\Lambda,J^{(1)}\right]+QJ^{(0)}.\notag
\end{gather}
Similarly, substituting into the temporal part, the coefficients of orders $O(k^3)$ and $O(k^2)$ are derived as
\begin{gather}
  4i \alpha \left[\Lambda,J^{(0)}\right]=0,\notag \\
  4i \alpha \left[\Lambda,J^{(1)}\right]+4QJ^{(0)}=0.\notag
\end{gather}
Thus, $J^{(0)}_x=0$. Taking into account the asymptotic conditions of $J(x,t;k)$, we know $J^{(0)}=I$. Then
\begin{gather}
\begin{aligned}
Q&=-i \alpha\left[\Lambda,J^{(1)}\right]\\
&=-2i \alpha\left(\begin{array}{ccc}
0 & -J_{12}^{(1)} & -J_{13}^{(1)} \\
J_{21}^{(1)} & 0 & 0\\
J_{31}^{(1)} & 0 & 0
\end{array}\right).\notag
\end{aligned}
\end{gather}
Considering the expression of matrix $Q$, the potentials $u(x,t)$ and $v(x,t)$ can be constructed as
\begin{equation}
  u(x,t)=2i J_{12}^{(1)}(x,t),\quad v(x,t)=2i J_{13}^{(1)}(x,t),\notag
\end{equation}
which are equivalent to the formulae (\ref{sol}) of the solutions $u(x,t)$ and $v(x,t)$.
\end{proof}

We introduce the definition $\theta(k)=i \alpha k(x+4k^2t)$, and the linear relationship (\ref{S}) can be rewritten as
\begin{gather}\label{SS}
  (J_{11},J_{12},J_{13})=(J_{21},J_{22},J_{23})e^{\theta(k)\widehat{\Lambda}}\left(\begin{array}{ccc}
  s_{11} & s_{12} & s_{13} \\
  s_{21} & s_{22} & s_{23} \\
  s_{31} & s_{32} & s_{33}
  \end{array}\right).
\end{gather}
Through calculating, the above formula (\ref{SS}) is equivalent to
\begin{gather}\label{SSS}
\begin{aligned}
  \left(\frac{J_{11}}{s_{11}},J_{22},J_{23}\right)=&\left(J_{21},\frac{s_{33}J_{12}-s_{32}J_{13}}{\Delta(k)},\frac{-s_{23}J_{12}+s_{22}J_{13}}{\Delta(k)}\right)\\
  &\left(\begin{array}{ccc}
  \frac{1}{s_{11}\Delta(k)} & e^{-2\theta(k)}\frac{s_{13}s_{32}-s_{12}s_{33}}{\Delta(k)} & e^{-2\theta(k)}\frac{s_{12}s_{23}-s_{13}s_{22}}{\Delta(k)}\\
  e^{2\theta(k)}\frac{s_{21}}{s_{11}} & 1 & 0 \\
  e^{2\theta(k)}\frac{s_{31}}{s_{11}} & 0 & 1
\end{array}\right),
\end{aligned}
\end{gather}
where $\Delta(k)=s_{22}s_{33}-s_{23}s_{32}$. We define
\begin{gather}
  \mathbb{D}_+=\{k\in\mathbb{C}|\alpha\operatorname{Im}(k)>0\},\quad \mathbb{D}_-=\{k\in\mathbb{C}|\alpha\operatorname{Im}(k)<0\},\notag
\end{gather}
and consider the analytical properties of Jost solutions $J_1(x,t;k)$ and $J_2(x,t;k)$. Thus, the $J_{11}$, $J_{22}$ and $J_{23}$ are bounded and analytic in the region $\mathbb{D}_+$, and the $J_{12}$, $J_{13}$ and $J_{21}$ are bounded and analytic in the region $\mathbb{D}_-$. Furthermore, the matrix function $M(x,t;k)$ is defined by
\begin{gather}\label{M}
M(x,t;k)=\left\{\begin{array}{l}
\left(\frac{J_{11}}{s_{11}},J_{22},J_{23}\right),\ k\in \mathbb{D}_+,\\
\left(J_{21},\frac{s_{33}J_{12}-s_{32}J_{13}}{\Delta(k)},\frac{-s_{23}J_{12}+s_{22}J_{13}}{\Delta(k)}\right),\  k\in \mathbb{D}_-,
\end{array}\right.
\end{gather}
Combining the relational expression (\ref{SSS}) with the definition (\ref{M}) of $M(x,t;k)$, the Riemann-Hilbert problem of the coupled mKdV equation (\ref{cmKdV}) is constructed as

\noindent\textbf{Riemann-Hilbert problem}
\begin{itemize}
\item $M(x,t;k)$ is meromorphic for $k\in \mathbb{C}\setminus\mathbb{R}$;
\item $M_{+}(x,t;k)=M_{-}(x,t;k)G(x,t;k),\ k\in\mathbb{R},$

where the jump matrix is
\begin{gather}
G(x,t;k)=
\left(\begin{array}{ccc}
  \frac{1}{s_{11}\Delta(k)} & e^{-2\theta(k)}\frac{s_{13}s_{32}-s_{12}s_{33}}{\Delta(k)} & e^{-2\theta(k)}\frac{s_{12}s_{23}-s_{13}s_{22}}{\Delta(k)}\\
  e^{2\theta(k)}\frac{s_{21}}{s_{11}} & 1 & 0 \\
  e^{2\theta(k)}\frac{s_{31}}{s_{11}} & 0 & 1
\end{array}\right);\notag
\end{gather}
\item $M(x,t;k)\rightarrow I,\ k\rightarrow\infty$.
\end{itemize}

\begin{remark}
Since $\Delta(k)$ is the first element in the first column of the inverse of $S(k)$, that is
\begin{equation}
  \Delta(k)=(S^{-1}(k))_{11}=\overline{s_{11}(\bar{k})}.\notag
\end{equation}
Then, if $s_{11}(k)$ exists the simple zeros $\{k_j\}_{j=1}^N\in\mathbb{D}_+$, the $\Delta(k)$ has the simple zeros $\{\bar{k}_j\}_{j=1}^N\in\mathbb{D}_-$. At this time, the Riemann-Hilbert problem satisfies
the pole conditions
\begin{itemize}
\item $\operatorname{Res}\{M(x,t;k),k=k_j\}=\lim\limits_{k\to k_j}M(x,t;k)\left(\begin{array}{ccc}
0 & 0& 0\\
c_je^{2\theta(k_j)} & 0 & 0\\
d_je^{2\theta(k_j)} & 0 & 0
\end{array}\right)$;
\item $\operatorname{Res}\{M(x,t;k),k=\bar{k}_j\}=\lim\limits_{k\to \bar{k}_j}M(x,t;k)\left(\begin{array}{ccc}
    0 & -\bar{c}_je^{-2\theta(\bar{k}_j)} & -\bar{d}_je^{-2\theta(\bar{k}_j)}\\
    0& 0 & 0\\
    0& 0 & 0
    \end{array}\right)$.
\end{itemize}
\end{remark}

\begin{proposition}
The solutions of the coupled mKdV equation (\ref{cmKdV}) are obtained by the above Riemann-Hilbert problem
\begin{gather}\label{solution}
\begin{split}
  u(x,t)=2i \lim\limits_{k\to\infty}kM(x,t;k)_{12}, \\
  v(x,t)=2i \lim\limits_{k\to\infty}kM(x,t;k)_{13},
\end{split}
\end{gather}
where the subscripts 12 and 13 respectively represent the elements in the second column and the third column of the first row of the matrix $M(x,t;k)$.
\end{proposition}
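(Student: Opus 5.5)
The plan is to reduce the statement to the previous proposition (reconstruction of solutions): show that the $(1,2)$ and $(1,3)$ entries of $M$ have the same $1/k$ coefficient as $J_{12}$ and $J_{13}$ do. The natural choice is to take $k\to\infty$ in $\mathbb{D}_+$. There $M=(J_{11}/s_{11},J_{22},J_{23})$, so the first-row, second- and third-column entries of $M$ are $(J_{22})_1$ and $(J_{23})_1$. These are entries of the Jost solution $J_2$ rather than $J_1$. The expansion argument in the proof of the reconstruction proposition used only the spatial and temporal equations (\ref{JLax}) and the normalization $J^{(0)}=I$, so it applies verbatim to $J_2$. This gives $Q=-i\alpha[\Lambda,J_2^{(1)}]$ and hence $u=2i(J_2^{(1)})_{12}$ and $v=2i(J_2^{(1)})_{13}$.

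To transfer this to $M$, I would argue that $M$ itself satisfies (\ref{JLax}) in each half-plane. Each column of $M$ is a linear combination, with $k$-dependent but $x,t$-independent coefficients, of columns of $J_1$ or $J_2$ lying in the same column space of the commutator equation; for instance $J_{11}/s_{11}$ is a scalar multiple of the first column. Hence $M e^{\theta\Lambda}$ solves the Lax pair (\ref{Lax}), and $M$ solves (\ref{JLax}). I would then show $M=I+M^{(1)}/k+O(k^{-2})$ as $k\to\infty$ in either region. For this, use $s_{11}\to1$, $\Delta\to1$ and $s_{ij}\to 0$ for $i\neq j$ as $k\to\infty$; these follow from the Volterra equations together with the relation $S=e^{-\theta\widehat\Lambda}J_2^{-1}J_1$ evaluated as $x\to\pm\infty$. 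Repeating the order-by-order comparison from the proof of the reconstruction proposition then yields $Q=-i\alpha[\Lambda,M^{(1)}]$. The off-diagonal first-row entries give $\alpha u=2i\alpha M^{(1)}_{12}$ and $\alpha v=2i\alpha M^{(1)}_{13}$, which is (\ref{solution}).

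Alternatively, one can argue through the Riemann-Hilbert problem. Its solution is unique by a Liouville argument: $\det G=1$, and the symmetry $S^{-1}=\Sigma\overline{S(\bar k)}^\top\Sigma^{-1}$ from the symmetry proposition gives $G$ the structure needed for a vanishing lemma. Given uniqueness, the $M$ built from the Jost solutions is the solution, and the $1/k$ coefficient is independent of the direction in which $k\to\infty$.

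The main obstacle is the uniform large-$k$ expansion of $M$ in $\mathbb{D}_-$, where the entries are the combinations $(s_{33}J_{12}-s_{32}J_{13})/\Delta$ and similar. I would avoid this by taking the limit only within $\mathbb{D}_+$, say along the imaginary axis. There the expansion of $J_{22}$ and $J_{23}$ follows directly from their Volterra equations by the standard integration-by-parts argument for Schwartz potentials. If solitons are present, the poles $k_j$ are finitely many and do not affect the behavior at infinity.
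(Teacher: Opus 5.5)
Your proposal is correct and follows essentially the paper's route. The paper gives no separate proof of this proposition: it treats it as an immediate consequence of the reconstruction proposition applied to $M$, whose columns are Jost columns (or $x,t$-independent combinations of same-eigenvalue Jost columns) and therefore satisfy (\ref{JLax}) with $M\to I$. Your choice to send $k\to\infty$ inside $\mathbb{D}_+$, where $M_{12}$ and $M_{13}$ are exactly the first-row entries of the second and third columns of $J_2$, combined with the integration-by-parts expansion from the Volterra equation, supplies the justification of the large-$k$ expansion that the paper only assumes.
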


\section{Numerical direct scattering}\label{sec3}

In this section, we numerically investigate the scattering data in the direct scattering problem associated with the coupled mKdV system (\ref{cmKdV}). Based on the spatial part of the Lax pair, we develop the Chebyshev collocation method for computing the scattering matrix $S(k)$ and eigenvalue $k$. These scattering data enter directly into the jump condition for the Riemann-Hilbert formulation, thus their accurate numerical computation is essential for the subsequent implementation of the NIST method.

Since the scattering matrix $S(k)$ satisfies the expression (\ref{S}) and depends only on the spectral parameter $k$ not on $x$ or $t$, $S(k)$ can be determined by setting $x=t=0$
\begin{gather}\label{Sexpression}
S(k)=J_2^{-1}(0,0;k)J_1(0,0;k).
\end{gather}
The matrix functions $J_1(x,t;k)$ and $J_2(x,t;k)$ respectively exhibit asymptotic properties $J_1\rightarrow I,\ x\rightarrow-\infty$ and $J_2\rightarrow I,\ x\rightarrow+\infty$. We define two new matrix functions
\begin{gather}\label{PhiJ}
\Phi_i(x,t;k)=J_i(x,t;k)-I,\ i=1,2,
\end{gather}
which possess the asymptotic properties $\Phi_1\rightarrow0,\ x\rightarrow-\infty$ and $\Phi_2\rightarrow0,\ x\rightarrow+\infty$, respectively. Not only that, the $\Phi_1(x,t;k)$ and $\Phi_2(x,t;k)$ are matrix solutions of the new equation with $x$ partial derivative
\begin{gather}\label{Phi}
\Phi_x-i \alpha k[\Lambda,\Phi]-Q\Phi=Q,
\end{gather}
where
\begin{gather}
\Phi=\left(\begin{array}{ccc}
\phi_{11} & \phi_{12} & \phi_{13} \\
\phi_{21} & \phi_{22} & \phi_{23} \\
\phi_{31} & \phi_{32} & \phi_{33}
\end{array}\right).\notag
\end{gather}
Then, the equation (\ref{Phi}) satisfied by $\Phi_1(x,t;k)$ and $\Phi_2(x,t;k)$ is equivalent to the following three matrix equations
\begin{gather}\label{Phi1}
\left(\begin{array}{c}
\phi_{11x}  \\
\phi_{21x}  \\
\phi_{31x}
\end{array}\right)-2i \alpha k\left(\begin{array}{c}
0  \\
\phi_{21}  \\
\phi_{31}
\end{array}\right)-\left(\begin{array}{c}
\alpha(u\phi_{21}+v\phi_{31}) \\
u\phi_{11} \\
v\phi_{11}
\end{array}\right)=\left(\begin{array}{c}
0 \\
u \\
v
\end{array}\right),
\end{gather}
\begin{gather}\label{Phi2}
\left(\begin{array}{c}
\phi_{12x}  \\
\phi_{22x}  \\
\phi_{32x}
\end{array}\right)+2i \alpha k\left(\begin{array}{c}
\phi_{12}  \\
0 \\
0
\end{array}\right)-\left(\begin{array}{c}
\alpha(u\phi_{22}+v\phi_{32}) \\
u\phi_{12} \\
v\phi_{12}
\end{array}\right)=\left(\begin{array}{c}
\alpha u \\
0 \\
0
\end{array}\right),
\end{gather}
\begin{gather}\label{Phi3}
\left(\begin{array}{c}
\phi_{13x}  \\
\phi_{23x}  \\
\phi_{33x}
\end{array}\right)+2i \alpha k\left(\begin{array}{c}
\phi_{13}  \\
0 \\
0
\end{array}\right)-\left(\begin{array}{c}
\alpha(u\phi_{23}+v\phi_{33}) \\
u\phi_{13} \\
v\phi_{13}
\end{array}\right)=\left(\begin{array}{c}
\alpha v \\
0 \\
0
\end{array}\right).
\end{gather}
By applying the Chebyshev collocation method \citep{Akyuz1999, Cui2024-2}, the systems (\ref{Phi1})-(\ref{Phi3}) can be converted into algebraic equations and then solved numerically. To this end, we begin to introduce the $n$ Chebyshev nodes and the associated Chebyshev polynomials
\begin{gather}
\boldsymbol{x}=\left(-1,\ \operatorname{cos}\left(\frac{n-2}{n-1}\pi\right),\  \cdots,\  \operatorname{cos}\left(\frac{1}{n-1}\pi\right),\ 1\right)^{\top},\notag\\
T_m(x)=\operatorname{cos}(m\operatorname{arccos}x),\ m=0,\ 1,\ \cdots,\ n-1.\notag
\end{gather}
According to the different asymptotic conditions of $\Phi_1$ and $\Phi_2$, we select the appropriate mapping function $H(x)$, which maps the interval $(-\infty,0]$ or $[0,+\infty)$ onto the unit interval $\mathbb{I}$
\begin{gather}
for\  \Phi_1,\ H(x)=2\operatorname{tanh}(ax)+1:\ (-\infty,0]\rightarrow\mathbb{I},\notag\\
for\  \Phi_2,\ H(x)=2\operatorname{tanh}(ax)-1:\ [0,+\infty)\rightarrow\mathbb{I}.\notag
\end{gather}
Focusing on the function $\phi(x)$ defined on the interval $(-\infty,0]$ or $[0,+\infty)$, we represent $\phi(x)$ using Chebyshev polynomials $T_m(x)$, mapping function $H(x)$, and Chebyshev nodes $\boldsymbol{x}$
\begin{gather}\label{phi0}
\phi(x)=T^{\mathbb{R}}(x)\mathcal{F}\phi(\hat{\boldsymbol{x}}),
\end{gather}
where
\begin{gather}
T^{\mathbb{R}}(x)=(T_0(H(x)),\ T_1(H(x)),\ \cdots,\ T_{n-1}(H(x))),\notag\\
\mathcal{F}=(T_0(\boldsymbol{x}),\ T_1(\boldsymbol{x}),\cdots,\ T_{n-1}(\boldsymbol{x}))^{-1},\ \hat{\boldsymbol{x}}=H^{-1}(\boldsymbol{x}).\notag
\end{gather}
Not only that, the derivative of function $\phi(x)$ is expressed based on the derivative properties of Chebyshev polynomials as
\begin{gather}\label{dphi}
\frac{\operatorname{d}\phi(x)}{\operatorname{d}x}=\frac{\operatorname{d}H(x)}{\operatorname{d}x}T^{\mathbb{R}}(x)\mathcal{D}\mathcal{F}\phi(\hat{\boldsymbol{x}}),
\end{gather}
where
\begin{gather}
\mathcal{D}=\left(\begin{array}{cccccc}
0 & 1 & 0 & 3 & \cdots & 0\\
0 & 0 & 4 & 0 & \cdots & 2(n-1)\\
0 & 0 & 0 & 6 & \cdots & 0\\
\vdots & \vdots & \vdots & \vdots & \ddots & \vdots\\
0 & 0 & 0 & 0 & \cdots & 2(n-1)\\
0 & 0 & 0 & 0 & \cdots & 0
\end{array}\right),\quad n\ is\ odd,\notag\\
\mathcal{D}=\left(\begin{array}{cccccc}
0 & 1 & 0 & 3 & \cdots & n-1\\
0 & 0 & 4 & 0 & \cdots & 0\\
0 & 0 & 0 & 6 & \cdots & 2(n-1)\\
\vdots & \vdots & \vdots & \vdots & \ddots & \vdots\\
0 & 0 & 0 & 0 & \cdots & 2(n-1)\\
0 & 0 & 0 & 0 & \cdots & 0
\end{array}\right),\quad  n\ is\ even.\notag
\end{gather}
Combining with the expressions (\ref{phi0}) and (\ref{dphi}), the above matrix equations (\ref{Phi1})-(\ref{Phi3}) can be rewritten as the following discrete form
\begin{gather}\label{phi11}
\left(\begin{array}{ccc}
\mathcal{H} & -\alpha \operatorname{D}[u(\hat{\boldsymbol{x}})] & -\alpha \operatorname{D}[v(\hat{\boldsymbol{x}})] \\
-\operatorname{D}[u(\hat{\boldsymbol{x}})] & \mathcal{H}-2i \alpha kI & 0 \\
-\operatorname{D}[v(\hat{\boldsymbol{x}})] & 0 & \mathcal{H}-2i \alpha kI
\end{array}\right)\left(\begin{array}{c}
\phi_{11}(\hat{\boldsymbol{x}})  \\
\phi_{21}(\hat{\boldsymbol{x}})  \\
\phi_{31}(\hat{\boldsymbol{x}})
\end{array}\right)=\left(\begin{array}{c}
0 \\
u(\hat{\boldsymbol{x}}) \\
v(\hat{\boldsymbol{x}})
\end{array}\right),
\end{gather}
\begin{gather}\label{phi21}
\left(\begin{array}{ccc}
\mathcal{H}+2i \alpha kI & -\alpha \operatorname{D}[u(\hat{\boldsymbol{x}})] & -\alpha \operatorname{D}[v(\hat{\boldsymbol{x}})] \\
-\operatorname{D}[u(\hat{\boldsymbol{x}})] & \mathcal{H} & 0 \\
-\operatorname{D}[v(\hat{\boldsymbol{x}})] & 0 & \mathcal{H}
\end{array}\right)\left(\begin{array}{c}
\phi_{12}(\hat{\boldsymbol{x}})  \\
\phi_{22}(\hat{\boldsymbol{x}})  \\
\phi_{32}(\hat{\boldsymbol{x}})
\end{array}\right)=\left(\begin{array}{c}
\alpha u(\hat{\boldsymbol{x}})\\
0 \\
0
\end{array}\right),
\end{gather}
\begin{gather}\label{phi31}
\left(\begin{array}{ccc}
\mathcal{H}+2i \alpha kI & -\alpha \operatorname{D}[u(\hat{\boldsymbol{x}})] & -\alpha \operatorname{D}[v(\hat{\boldsymbol{x}})] \\
-\operatorname{D}[u(\hat{\boldsymbol{x}})] & \mathcal{H} & 0 \\
-\operatorname{D}[v(\hat{\boldsymbol{x}})] & 0 & \mathcal{H}
\end{array}\right)\left(\begin{array}{c}
\phi_{13}(\hat{\boldsymbol{x}})  \\
\phi_{23}(\hat{\boldsymbol{x}})  \\
\phi_{33}(\hat{\boldsymbol{x}})
\end{array}\right)=\left(\begin{array}{c}
\alpha v(\hat{\boldsymbol{x}})\\
0 \\
0
\end{array}\right),
\end{gather}
where $\mathcal{H}=\operatorname{D}\left[\frac{\operatorname{d}H(\hat{\boldsymbol{x}})}{\operatorname{d}x}\right]\mathcal{F}^{-1}\mathcal{D}\mathcal{F}$. The matrix functions $\Phi_1$ and $\Phi_2$ can be determined at the discrete point $k$ by numerically solving the matrix equations (\ref{phi11})-(\ref{phi31}). Subsequently, the value of the scattering matrix $S(k)$ is derived based on the expressions (\ref{Sexpression}) and (\ref{PhiJ}), and the elements $s_{ij}(k)$ can also be obtained. We introduce the row vector $\rho(k)$ of the reflection coefficients
\begin{gather}\label{rho}
\rho(k)=(s_{12}(k),s_{13}(k))\left(\begin{array}{cc}
s_{22}(k) & s_{23}(k) \\
s_{32}(k) & s_{33}(k)
\end{array}\right)^{-1}=(\rho_1(k),\rho_2(k)).
\end{gather}

Figs. \ref{fi1}-\ref{fi2} present the numerically computed reflection coefficients $\rho_1(k)$ and $\rho_2(k)$ for different initial potential functions. In these figures, the calculated results are shown under $a=0.1$ and $n=101$, and the calculated interval is truncated into $\left[-5,5\right]$. The blue line represents the real part of associated reflection coefficient, and the dotted red line denotes the imaginary part. In Fig. \ref{fi1}, reflection coefficients $\rho_1(k)$ and $\rho_2(k)$ are calculated with two different initial profiles $u_0=0.5\operatorname{sech}(x)$ and $v_0=1.5\operatorname{exp}(-x^2)$, which exhibit clear differences in the overall shape and peak distribution. In Fig. \ref{fi2}, we consider a second set of initial potentials $u_0=\operatorname{exp}(-x^2)\operatorname{sech}(x)$ and $v_0=0.5\operatorname{exp}(-x^2)\operatorname{sech}(x)$, and the calculated $\rho_1(k)$ and $\rho_2(k)$ possess similar curve profiles and differ only by the amplitudes. To demonstrate that the numerical scheme applies to both $\alpha=-1$ and $\alpha=1$, we also compute reflection coefficients for these two signs with the same initial profiles in Figs. \ref{fi1}-\ref{fi2}. It can be seen that both $\rho_1(k)$ and $\rho_2(k)$ smoothly distribute near the origin of $k$ and decay rapidly as $|k|$ increases. Moreover, changing the sign of $\alpha$ leads to a noticeable change in the profiles of both the real and imaginary parts.

\begin{figure}[ht]
\centering
  \includegraphics[width=0.75\linewidth]{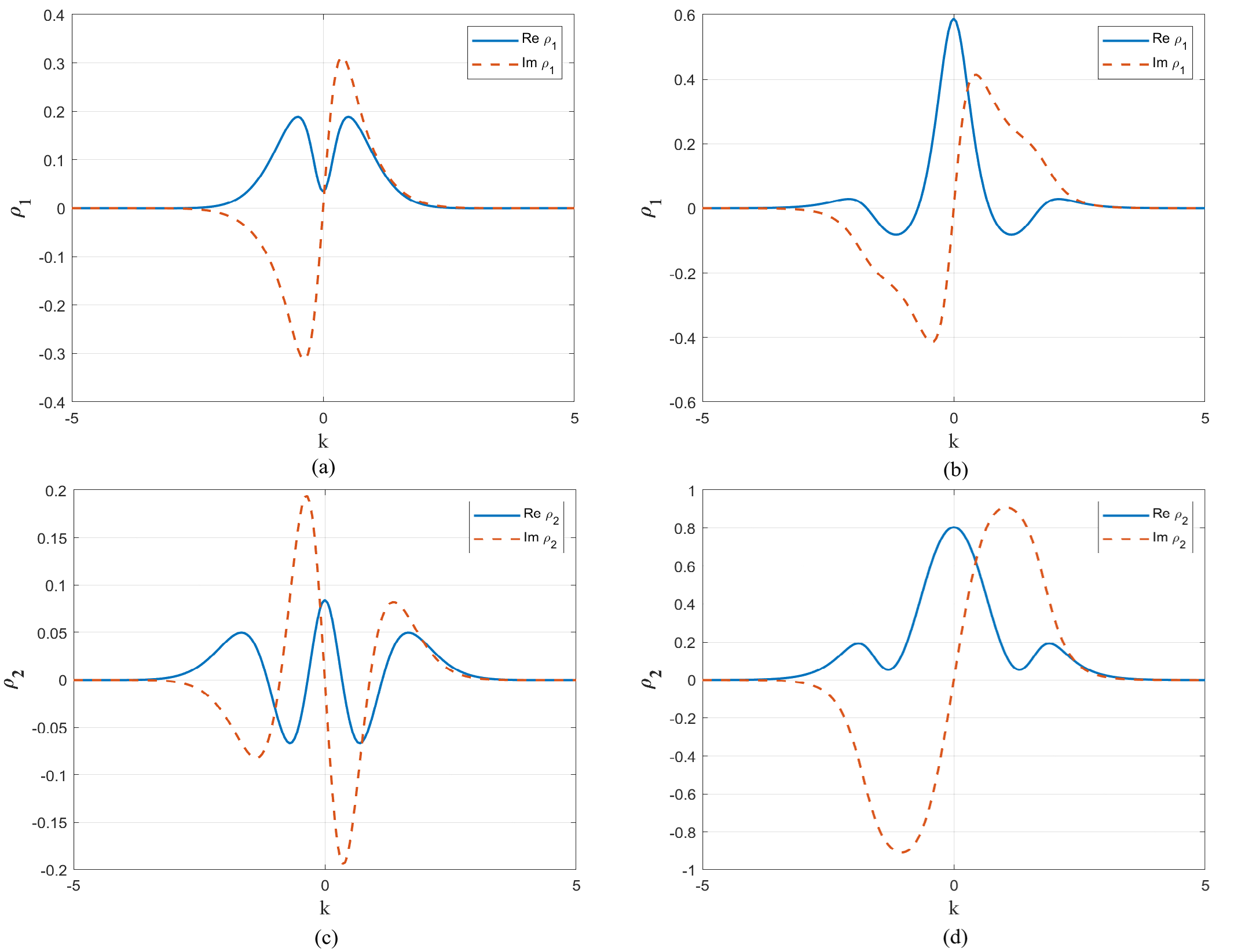}
  \caption{The calculated reflection coefficients $\rho_1(k)$ and $\rho_2(k)$ for $u_0=0.5\operatorname{sech}(x)$ and $v_0=1.5\operatorname{exp}(-x^2)$ initial potentials. (a)-(c): $\alpha=-1$; (b)-(d): $\alpha=1$.}
\label{fi1}
\end{figure}

\begin{figure}[ht]
\centering
  \includegraphics[width=0.75\linewidth]{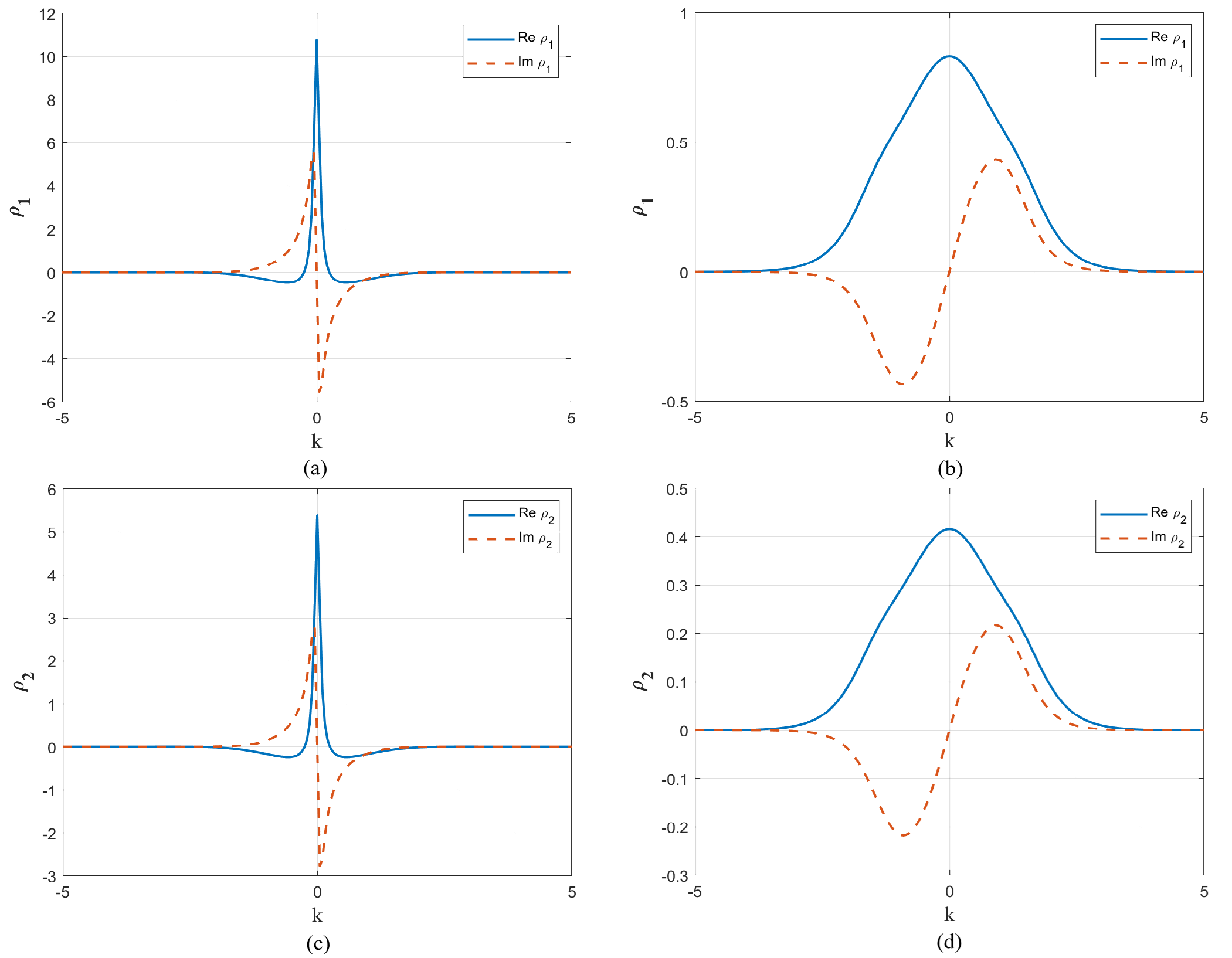}
  \caption{The calculated reflection coefficients $\rho_1(k)$ and $\rho_2(k)$ for $u_0=\operatorname{exp}(-x^2)\operatorname{sech}(x)$ and $v_0=0.5\operatorname{exp}(-x^2)\operatorname{sech}(x)$ initial potentials. (a)-(c): $\alpha=-1$; (b)-(d): $\alpha=1$.}
\label{fi2}
\end{figure}

The coupled mKdV equation (\ref{cmKdV}) with parameter $\alpha=-1$ has the following form of the line soliton solutions \citep{Wu2017}
\begin{gather}\label{linesol}
\begin{split}
  u(x,t)=-\frac{2\sqrt{3}}{3}\operatorname{sech}(2x-8t-\operatorname{ln}\frac{\sqrt{3}}{2}), \\
  v(x,t)=-\frac{2\sqrt{6}}{3}\operatorname{sech}(2x-8t-\operatorname{ln}\frac{\sqrt{3}}{2}),
\end{split}
\end{gather}
thus, the initial potential functions of the above line soliton solutions (\ref{linesol}) are
\begin{gather}\label{inilinesol}
\begin{split}
  u_0(x)=-\frac{2\sqrt{3}}{3}\operatorname{sech}(2x-\operatorname{ln}\frac{\sqrt{3}}{2}), \\
  v_0(x)=-\frac{2\sqrt{6}}{3}\operatorname{sech}(2x-\operatorname{ln}\frac{\sqrt{3}}{2}).
\end{split}
\end{gather}
In this case, the reflection coefficients $\rho_1(k)$ and $\rho_2(k)$ remain close to zero in $\mathbb{R}$. To investigate the influence of mapping parameter $a$ on numerical results, we fix the number of collocation points at $n=101$ and compute over the spectral interval $k\in[-10,10]$. The absolute values of $\rho_1(k)$ and $\rho_2(k)$ are numerically computed with different $a$ from 0.01 to 0.07, which are shown in Fig. \ref{fi3}. It is obvious that if the mapping parameter $a\in(0,0.1]$, the calculated results become better as $a$ increases.

\begin{figure}[ht]
\centering
  \includegraphics[width=0.75\linewidth]{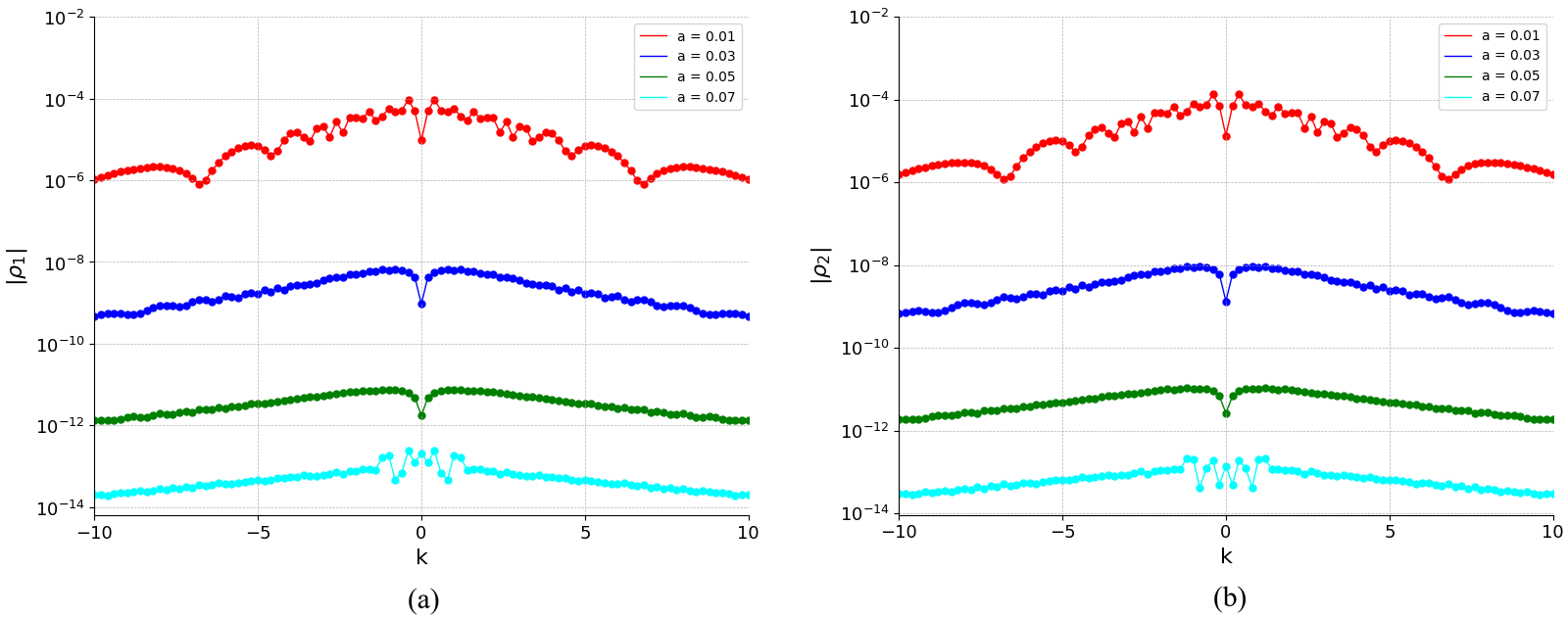}
  \caption{The absolute values of $\rho_1(k)$ and $\rho_2(k)$ for initial potentials (\ref{inilinesol}) of line soliton solutions (\ref{linesol}).}
\label{fi3}
\end{figure}

To assess the numerical stability and convergence of the numerical scheme with respect to the collocation number $n$, we fix $a=0.1$, $\alpha=1$ and $k=0$, and choose the initial conditions as $u_0=\operatorname{exp}(-x^2)\operatorname{sech}(x)$ and $v_0=0.5\operatorname{exp}(-x^2)\operatorname{sech}(x)$. The reflection coefficients obtained with $n_{\operatorname{ref}}=301$ are taken as the references, and we define
\begin{gather}
  \operatorname{Error}1(n)=|\rho_1(n,0)-\rho_1(301,0)|,\quad \operatorname{Error}2(n)=|\rho_2(n,0)-\rho_2(301,0)|.\notag
\end{gather}
Fig. \ref{fi4} displays that both errors decay rapidly as $n$ increases from small values, and then level off around $10^{-12}$. The experiment indicates that the proposed Chebyshev collocation method is numerically stable and convergent for solving the reflection coefficient $\rho(k)$.

\begin{figure}[ht]
\centering
  \includegraphics[width=0.75\linewidth]{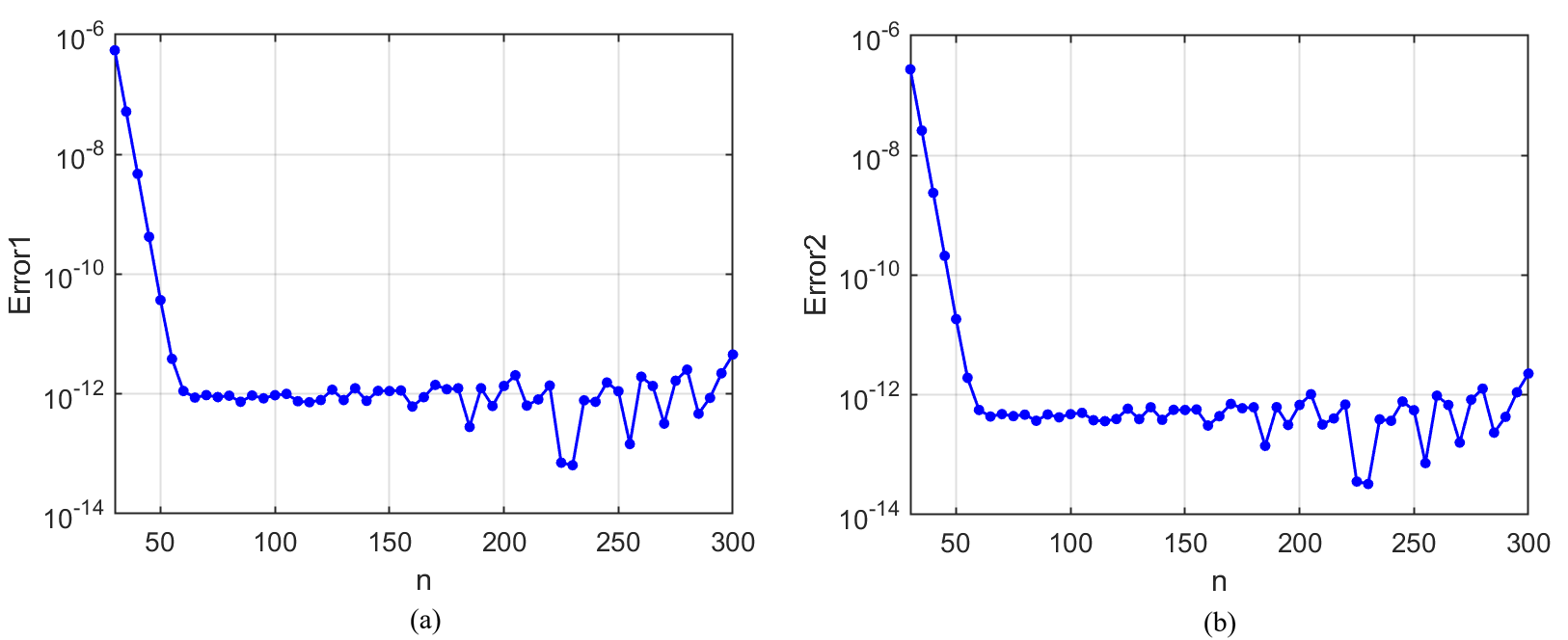}
  \caption{The convergence of reflection coefficients $\rho_1(k)$ and $\rho_2(k)$ for $u_0=\operatorname{exp}(-x^2)\operatorname{sech}(x)$ and $v_0=0.5\operatorname{exp}(-x^2)\operatorname{sech}(x)$ initial potentials.}
\label{fi4}
\end{figure}

The eigenvalue $k$ of the coupled mKdV equation (\ref{cmKdV}) can be numerically calculated based on the spatial matrix spectral problem of the Lax pair (\ref{Lax}). We take the eigenfunction
\begin{gather}
  Y(x,t;k)=(Y_1(x,t;k),\ Y_2(x,t;k),\ Y_3(x,t;k))^{\top},\notag
\end{gather}
then the matrix equation with $x$ derivative for $Y(x,t;k)$ is given as
\begin{gather}\label{Y}
  \left(\begin{array}{ccc}
  -\partial x & \alpha u &\alpha v\\
  -u & \partial x & 0 \\
  -v & 0 & \partial x
  \end{array}\right)\left(\begin{array}{c}
		Y_{1}  \\
		Y_{2}  \\
        Y_{3}
	\end{array}\right)=i \alpha k \left(\begin{array}{c}
		Y_{1}  \\
		Y_{2}  \\
        Y_{3}
	\end{array}\right).
\end{gather}
At this moment, we choose the appropriate mapping function $H(x)$ as
\begin{gather}
  H(x)=\operatorname{tanh}(0.1x),\notag
\end{gather}
which maps the interval $\mathbb{R}$ to the unit interval $\mathbb{I}$. Then, the eigenvalue problem (\ref{Y}) of the coupled mKdV equation (\ref{cmKdV}) can be expressed as the following matrix equation
\begin{gather}\label{eigenvalue}
\left(\begin{array}{ccc}
  -\mathcal{H} & \alpha \operatorname{D}[u(\hat{\boldsymbol{x}})] & \alpha \operatorname{D}[v(\hat{\boldsymbol{x}})] \\
  -\operatorname{D}[u(\hat{\boldsymbol{x}})] & \mathcal{H} & 0 \\
  -\operatorname{D}[v(\hat{\boldsymbol{x}})] & 0 & \mathcal{H}
  \end{array}\right)\left(\begin{array}{c}
		Y_{1}(\hat{\boldsymbol{x}})  \\
		Y_{2}(\hat{\boldsymbol{x}})  \\
        Y_{3}(\hat{\boldsymbol{x}})
	\end{array}\right)=i \alpha k\left(\begin{array}{c}
		Y_{1}(\hat{\boldsymbol{x}})  \\
		Y_{2}(\hat{\boldsymbol{x}})  \\
        Y_{3}(\hat{\boldsymbol{x}})
	\end{array}\right).
\end{gather}
After Chebyshev collocation, the spectral problem (\ref{eigenvalue}) is reduced to a finite-dimensional matrix eigenvalue problem, which can be explored numerically using Matlab's eig routine.

Fig. \ref{fi5} illustrates the numerically computed spectral points in the complex $k$-plane under $n=200$ for the initial potentials $u_0=0.5\operatorname{sech}(x)$ and $v_0=1.5\operatorname{exp}(-x^2)$. For $\alpha=-1$, a pair of discrete eigenvalues is located at $k=\pm0.814i $, whereas for $\alpha=1$, no discrete eigenvalue appears. This indicates that the existence of the discrete spectrum depends sensitively on the parameter $\alpha$. Fig. \ref{fi6} further presents the numerical eigenvalues corresponding to the line-soliton initial data (\ref{inilinesol}), where a pair of discrete eigenvalues is found at $k=\pm i $, in agreement with the theoretical prediction. These figures show that the discrete eigenvalues are symmetrically distributed on the upper and lower imaginary axis, while the continuous eigenvalues are located on the real axis.

\begin{figure}[ht]
\centering
  \includegraphics[width=0.75\linewidth]{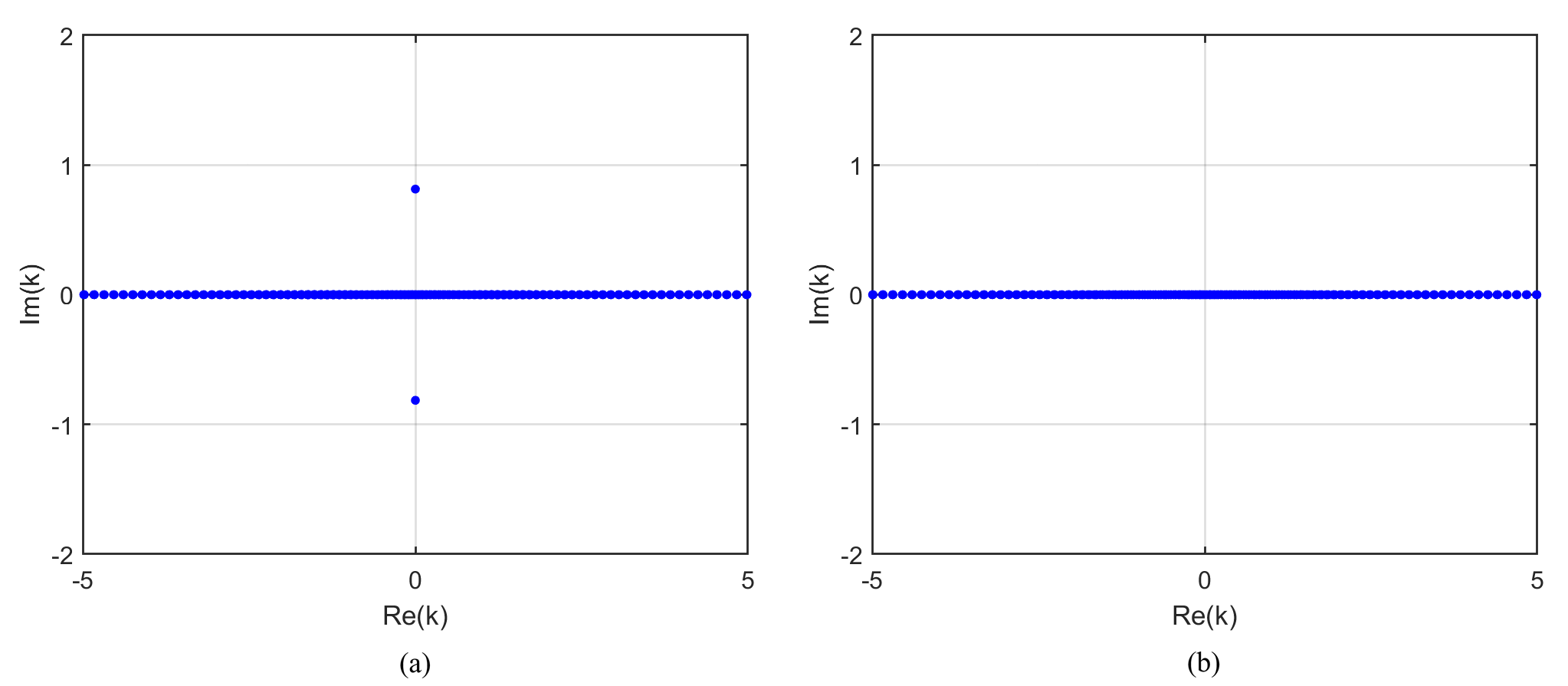}
  \caption{The calculated eigenvalues of initial potentials $u_0=0.5\operatorname{sech}(x)$ and $v_0=1.5\operatorname{exp}(-x^2)$. (a): $\alpha=-1$; (b): $\alpha=1$.}
\label{fi5}
\end{figure}

\begin{figure}[ht]
\centering
  \includegraphics[width=0.45\linewidth]{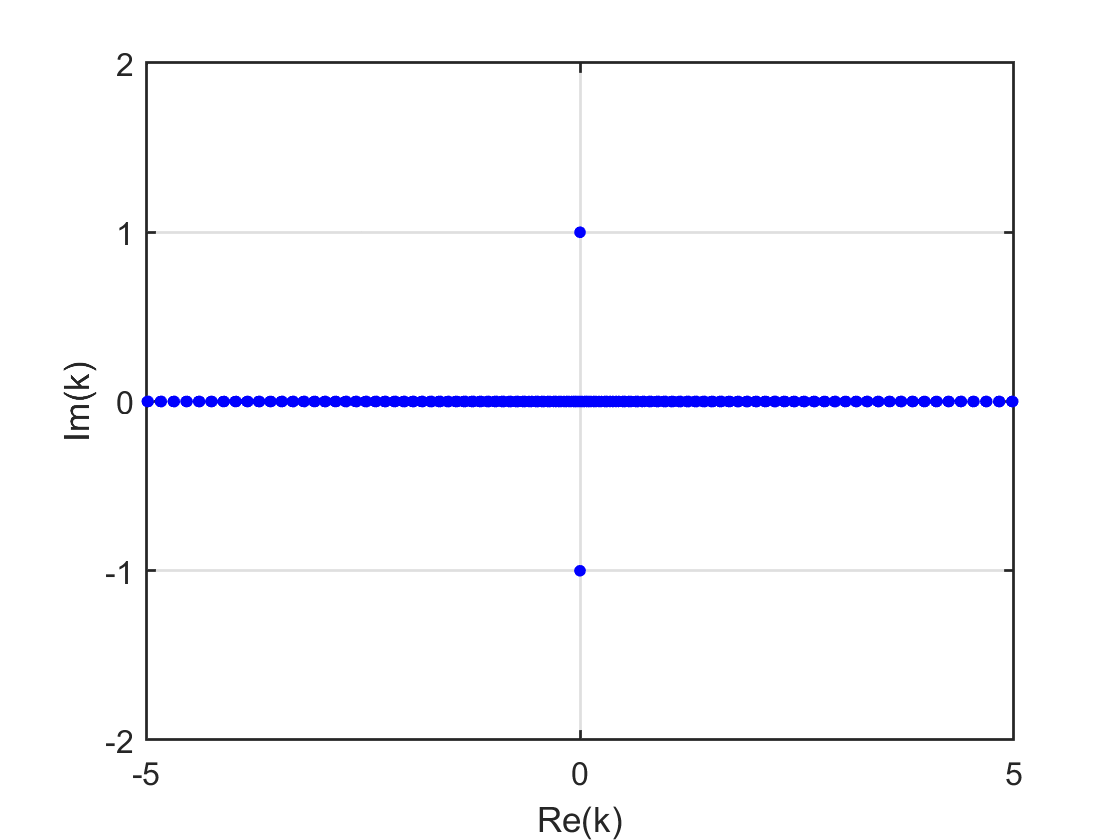}
  \caption{The calculated eigenvalues of the line soliton solutions (\ref{linesol}).}
\label{fi6}
\end{figure}

To test the numerical stability and convergence of the discrete eigenvalue computation, we take the line-soliton initial data (\ref{inilinesol}), and compute the spectral points for a series of $n$ in Fig. \ref{fi7}. We take the result at $n_{\operatorname{ref}}=301$ as the reference value, and define
\begin{gather}
  \operatorname{Error}(n)=|k(n)-k(301)|.\notag
\end{gather}
Fig. \ref{fi7} displays a rapid decay and eventual saturation of this error, which indicates excellent convergence of the discrete eigenvalue computation.

\begin{figure}[ht]
\centering
  \includegraphics[width=0.45\linewidth]{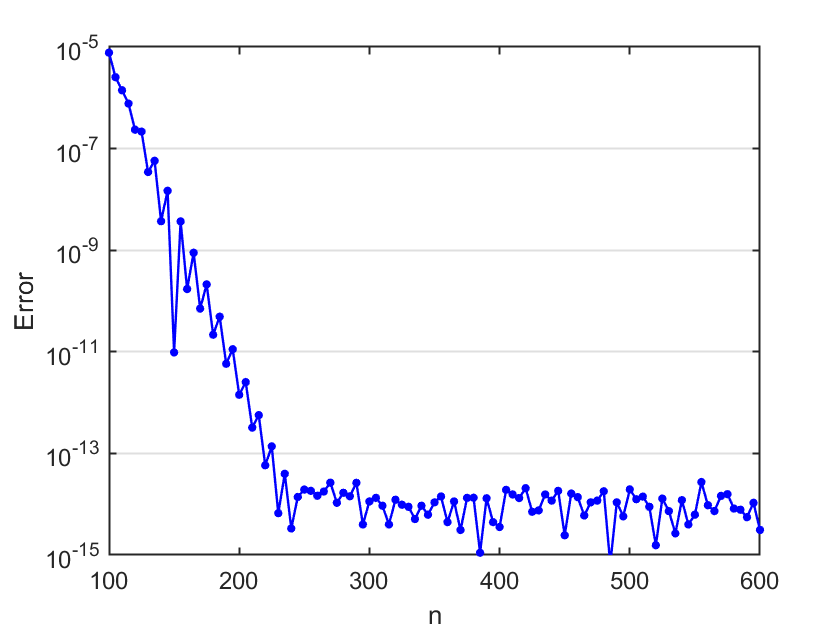}
  \caption{Spectral convergence of the discrete eigenvalues for the line soliton solutions (\ref{linesol}).}
\label{fi7}
\end{figure}

\section{Analysis and deformation of Riemann-Hilbert problem}\label{sec4}

In this section, we apply the Deift-Zhou nonlinear steepest descent method \citep{Deift1993} to the oscillatory Riemann-Hilbert problem formulated above. By factorizing the jump matrix appropriately and deforming the contour, the non-decaying oscillatory factors are redistributed onto contours where they become exponentially decaying. Focusing on the case $\alpha=1$, the jump matrix contains the oscillatory exponential factors $e^{2i k(x+4k^2t)}$ and $e^{-2i k(x+4k^2t)}$. To this end, we introduce the phase function
\begin{gather}\label{psi}
  \psi(k)=2i kx+8i k^3 t.
\end{gather}
Hence the stationary points are determined by $\psi^{\prime}(k)=0$, namely
\begin{gather}
  k_0=\sqrt{-\frac{x}{12t}},\quad -k_0=-\sqrt{-\frac{x}{12t}}.\notag
\end{gather}
The directions of steepest descent through $k_0$ are $\frac{\pi}{4}$ and $\frac{5\pi}{4}$, while the directions of steepest ascent are $-\frac{\pi}{4}$ and $\frac{3\pi}{4}$. In the meantime, the directions of steepest descent through $-k_0$ are $-\frac{\pi}{4}$ and $\frac{3\pi}{4}$, while the directions of steepest ascent are $\frac{\pi}{4}$ and $\frac{5\pi}{4}$.

We take into account $k=\operatorname{Re}k+i \operatorname{Im}k$, then the expression (\ref{psi}) of phase function $\psi(k)$ can be rewritten as
\begin{gather}
  \psi(k)=2i (\operatorname{Re}k+i \operatorname{Im}k)(x+4(\operatorname{Re}^2k+2i \operatorname{Re}k\operatorname{Im}k-\operatorname{Im}^2k)t).\notag
\end{gather}
Thus, combining with stationary point $k_0$, the real part of $\psi(k)$ is given as
\begin{gather}
\operatorname{Re}\psi(k)=8t\operatorname{Im}k(-3\operatorname{Re}^2k+\operatorname{Im}^2k+3k_0^2).\notag
\end{gather}
Since $t>0$, the sign of $\operatorname{Re}\psi(k)$ is determined by the signs of $\operatorname{Im}k$ and $(-3\operatorname{Re}^2k+\operatorname{Im}^2k+3k_0^2)$. Corresponding to the case $x<0$, the phase function $\psi(k)$ has two real stationary points. In this case, the sign distribution of $\operatorname{Re}\psi(k)$ in the complex-$k$ plane is shown in the Fig. \ref{fi8}. For the case $x>0$, the phase function $\psi(k)$ has two imaginary stationary points. Then, Fig. \ref{fi9} presents the sign distribution of $\operatorname{Re}\psi(k)$ of this case. In these two figures, $\operatorname{Re}\psi(k)<0$ in the orange area, and the oscillatory exponential factor $e^{2i k(x+4k^2t)}$ is convergent. In the blue area, $\operatorname{Re}\psi(k)>0$, and the oscillatory exponential factor $e^{-2i k(x+4k^2t)}$ is convergent.

\begin{figure}[ht]
\centering
  \includegraphics[width=0.45\linewidth]{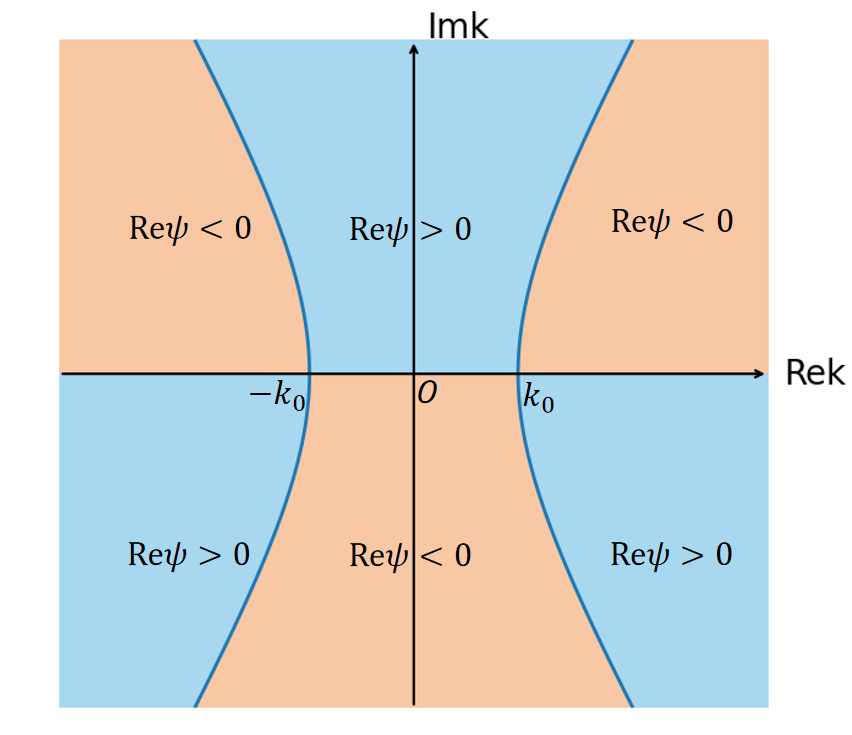}
  \caption{The sign of $\operatorname{Re}\psi(k)$ for $x<0$ in the complex-$k$ plane.}
\label{fi8}
\end{figure}

\begin{figure}[ht]
\centering
  \includegraphics[width=0.45\linewidth]{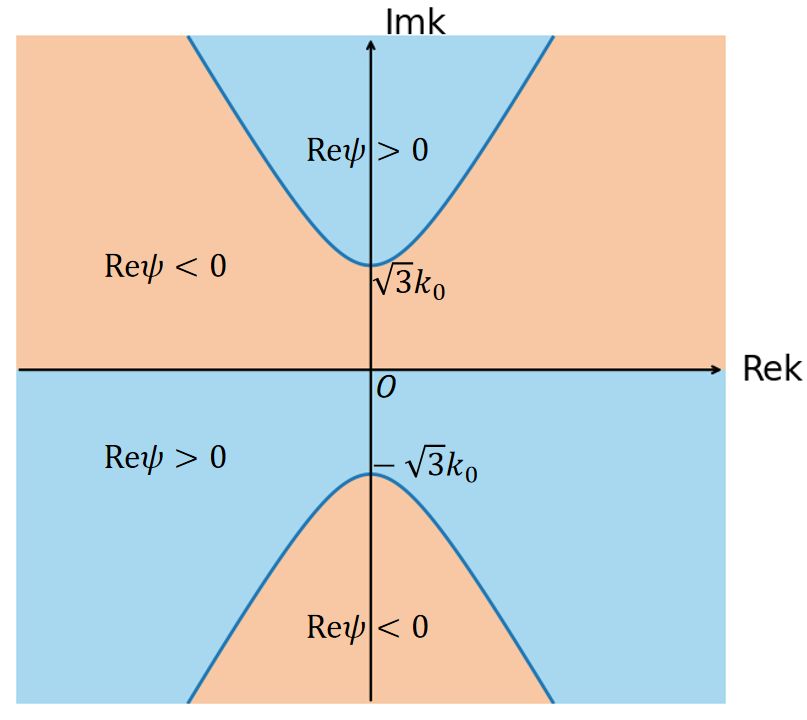}
  \caption{The sign of $\operatorname{Re}\psi(k)$ for $x>0$ in the complex-$k$ plane.}
\label{fi9}
\end{figure}

In order to further deform the original Riemann-Hilbert problem of the coupled mKdV equation (\ref{cmKdV}), we introduce a positive constant $c_1$ and divide the $(x,t)$-plane into the following three regions:
\begin{itemize}
\item The dispersive region
\begin{gather}
-x>c_1t^{1/3},\quad c_1>0.\notag
\end{gather}
In this region, the leading asymptotics of the solution exhibits an oscillatory tail. In fact, the oscillatory behavior is associated with the regime $x/t<0$. In order to separate the transition zone near the origin, we adopt the above refined partition. Since $x<0$, the phase function $\psi(k)$ has two real stationary points in this region.
\item The Painlev\'{e} transition region
\begin{gather}
|x|\leq c_1t^{1/3},\quad c_1>0.\notag
\end{gather}
This region describes the transition from a rapidly decaying regime to an oscillatory regime. It corresponding to the critical scaling near $x=0$, where the stationary points approach and merge at the origin. The leading asymptotics in this region is typically described by a Painlev\'{e}-type special function.
\item The fast decay region
\begin{gather}
x>c_1t^{1/3},\quad c_1>0.\notag
\end{gather}
In this region, since $x>0$, the phase function $\psi(k)$ has no real stationary point, and the solutions $u(x,t)$ and $v(x,t)$ decay rapidly. More precisely, for any $N>0$,
\begin{gather}
u(x,t)=O(t^{-N}),\quad v(x,t)=O(t^{-N}),\notag
\end{gather}
uniformly in this region.
\end{itemize}
In the following, we analyze the corresponding Riemann-Hilbert problem and deform the contour in each of the above regions separately.

\subsection{The dispersive region}

First, we deform the original Riemann-Hilbert problem in the dispersive region. In this region, $\psi(k)$ has two real stationary points, meaning that $k_0$ is a real number. The jump matrix $G(x,t;k)$ of the original Riemann-Hilbert problem contains oscillation terms $\operatorname{exp}(\psi(k))$ and $\operatorname{exp}(-\psi(k))$. In order to separate these two oscillatory exponential factors into different matrixes, we introduce two matrix factorizations of the jump matrix $G(x,t;k)$
\begin{gather}\label{matrixde}
\left\{\begin{array}{l}
G(x,t;k)=L(x,t;k)U(x,t;k),\\
G(x,t;k)=A(x,t;k)B(x,t;k)C(x,t;k),
\end{array}\right.
\end{gather}
where
\begin{gather}
L(x,t;k)=\left(\begin{array}{ccc}
1 & \frac{s_{13}s_{32}-s_{12}s_{33}}{\Delta(k)}e^{-\psi(k)} & \frac{s_{12}s_{23}-s_{13}s_{22}}{\Delta(k)}e^{-\psi(k)}\\
0 & 1 & 0\\
0 & 0 & 1
\end{array}\right),\notag
\end{gather}
\begin{gather}
  U(x,t;k)=\left(\begin{array}{ccc}
  1 & 0 & 0\\
  \frac{s_{21}}{s_{11}}e^{\psi(k)} & 1 & 0\\
  \frac{s_{31}}{s_{11}}e^{\psi(k)} & 0 & 1
  \end{array}\right),\notag
\end{gather}
\begin{gather}
  A(x,t;k)=\left(\begin{array}{ccc}
  1 & 0 & 0\\
  s_{21}\Delta(k)e^{\psi(k)} & 1 & 0\\
  s_{31}\Delta(k)e^{\psi(k)} & 0 & 1
  \end{array}\right),\notag
\end{gather}
\begin{gather}
  B(x,t;k)=\left(\begin{array}{ccc}
  \frac{1}{s_{11}\Delta(k)} & 0 & 0\\
  0 & 1-s_{21}(s_{13}s_{32}-s_{12}s_{33}) & -s_{21}(s_{12}s_{23}-s_{13}s_{22})\\
  0 & -s_{31}(s_{13}s_{32}-s_{12}s_{33}) & 1-s_{31}(s_{12}s_{23}-s_{13}s_{22})
  \end{array}\right),\notag
\end{gather}
\begin{gather}
  C(x,t;k)=\left(\begin{array}{ccc}
  1 & s_{11}(s_{13}s_{32}-s_{12}s_{33})e^{-\psi(k)} & s_{11}(s_{12}s_{23}-s_{13}s_{22})e^{-\psi(k)}\\
  0 & 1 & 0\\
  0 & 0 & 1
  \end{array}\right).\notag
\end{gather}

Through decomposing the matrix $G$ (\ref{matrixde}), the upper triangular factors $L$ and $C$ are extended to region of $\operatorname{Re}\psi(k)>0$, whereas the lower triangular factors $U$ and $A$ are extended to region of $\operatorname{Re}\psi(k)<0$. Taking into account the stationary points $k_0$ and $-k_0$, we divide the complex-$k$ plane into eight regions by means of rapid descent/rise lines and real axis, and the new jump contour is constructed as shown in Fig. \ref{fi10} below.

\begin{figure}[ht]
\centering
  \includegraphics[width=0.9\linewidth]{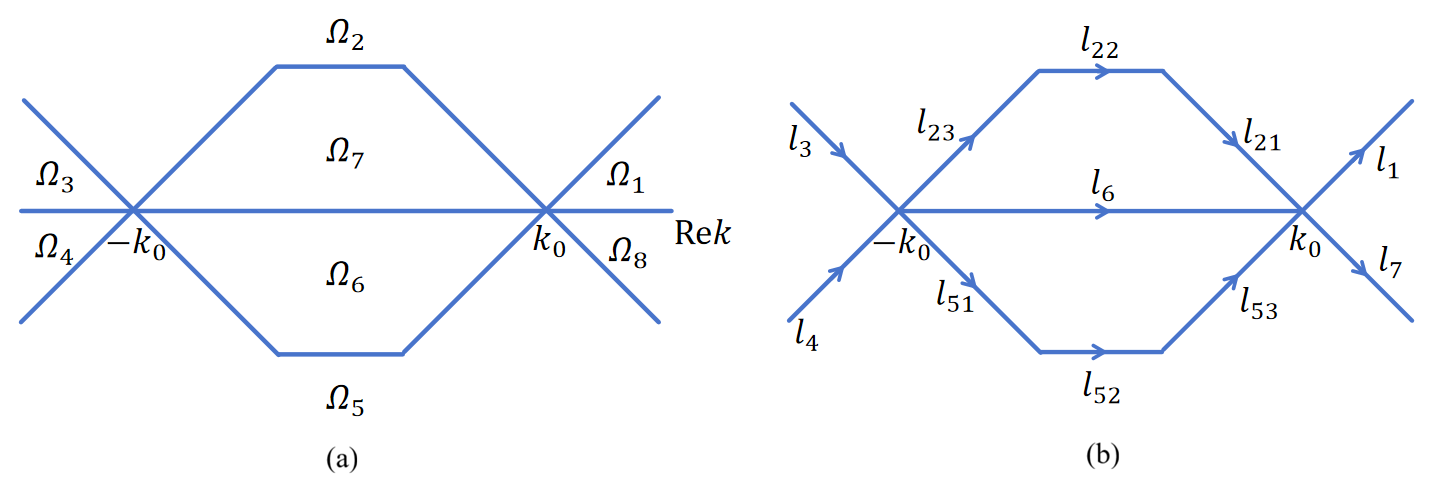}
  \caption{The division of regions and jump contours of Riemann-Hilbert problem I.}
\label{fi10}
\end{figure}

We define a new matrix function $M^{(1)}(x,t;k)$ on these regions
\begin{gather}\label{M1}
M^{(1)}(x,t;k)=M(x,t;k)\left\{\begin{array}{l}
U^{-1},\quad k\in \Omega_1\cup\Omega_3,\\
I,\ \ \ \quad k\in \Omega_2\cup\Omega_5,\\
L,\ \ \ \quad k\in \Omega_4\cup\Omega_8,\\
A,\ \ \ \quad k\in \Omega_6,\\
C^{-1},\quad k\in \Omega_7.
\end{array}\right.
\end{gather}
By construction, the jump matrixes on the newly opened contours contain only one oscillatory exponential factor, which is exponentially close to the identity in the corresponding regions. Then, we derive the following Riemann-Hilbert problem I of $M^{(1)}(x,t;k)$, which is constructed by the relationship (\ref{M1}).

\noindent\textbf{Riemann-Hilbert problem I}
\begin{itemize}
\item $M^{(1)}(x,t;k)$ is analytic in $\mathbb{C}\setminus(l_1\cup ...\cup l_7)$;
\item $M^{(1)}_{+}(x,t;k)=M^{(1)}_{-}(x,t;k)G^{(1)}(x,t;k),\ k\in l_1\cup ...\cup l_7,$

where the new jump matrix $G^{(1)}(x,t;k)$ is
\begin{gather}
G^{(1)}(x,t;k)=\left\{\begin{array}{l}
U,\quad k\in l_1\cup l_3,\\
C,\quad k\in l_{21}\cup l_{22}\cup l_{23},\\
L,\quad k\in l_4\cup l_7,\\
A,\quad k\in l_{51}\cup l_{52}\cup l_{53},\\
B,\quad k\in l_6;
\end{array}\right.\notag
\end{gather}
\item $M^{(1)}(x,t;k)\rightarrow I,\ k\rightarrow\infty$.
\end{itemize}

After the first lens opening, the jump matrix on the interval $(-k_0,k_0)\in\mathbb{R}$ admits a central block matrix $B$, which possesses a $2\times2$ block in the lower-right corner. Then we introduce the $2\times2$ matrix function $\delta(k)$ by the following Riemann-Hilbert problem:
\begin{itemize}
\item $\delta(k)$ is analytic in $\mathbb{C}\setminus\mathbb{R}$;
\item $\delta_{+}(k)=\delta_{-}(k)\left\{\begin{array}{l}
\left(\begin{array}{cc}
  1-s_{21}(s_{13}s_{32}-s_{12}s_{33}) & -s_{21}(s_{12}s_{23}-s_{13}s_{22})\\
  -s_{31}(s_{13}s_{32}-s_{12}s_{33}) & 1-s_{31}(s_{12}s_{23}-s_{13}s_{22})
  \end{array}\right),\ |k|<k_0,\\
  I,\quad |k|>k_0;
\end{array}\right.$
\item $\delta(k)\rightarrow I,\ k\rightarrow\infty$.
\end{itemize}
Meanwhile, the determinant of matrix $\delta(k)$ satisfies the scalar Riemann-Hilbert problem:
\begin{itemize}
\item $\operatorname{det}(\delta(k))$ is analytic in $\mathbb{C}\setminus\mathbb{R}$;
\item $\operatorname{det}(\delta(k))_{+}=\operatorname{det}(\delta(k))_{-}\left\{\begin{array}{l}
s_{11}\Delta(k),\quad |k|<k_0,\\
  1,\quad |k|>k_0;
\end{array}\right.$
\item $\operatorname{det}(\delta(k))\rightarrow 1,\ k\rightarrow\infty$.
\end{itemize}
According to the jump condition of the scalar Riemann-Hilbert problem, we obtain
\begin{gather}
(\operatorname{ln}(\operatorname{det}\delta))_+-(\operatorname{ln}(\operatorname{det}\delta))_-=\left\{\begin{array}{l}
\operatorname{ln}(s_{11}\Delta(k)),\quad |k|<k_0,\\
0,\quad |k|>k_0.
\end{array}\right.\notag
\end{gather}
According to the Plemelj formula, the $\operatorname{det}(\delta(k))$ admits the following expression
\begin{gather}
\operatorname{det}(\delta(k))=\operatorname{exp}\left[\frac{1}{2\pi i }\int_{-k_0}^{k_0}\frac{\operatorname{ln}(s_{11}(\zeta)\Delta(\zeta))}{\zeta-k}d\zeta\right].\notag
\end{gather}
Around the processing of the central block matrix $B$, we introduce a $3\times3$ matrix function
\begin{gather}
D(k)=\left(\begin{array}{cc}
  \frac{1}{\operatorname{det}(\delta(k))} &  \\
    & \delta(k)
  \end{array}\right),\notag
\end{gather}
which satisfies $\operatorname{det}(D(k))=1$ and
\begin{gather}
\left\{\begin{array}{l}
D_+(k)=D_-(k)B(k),\quad k\in (-k_0,k_0),\\
D(\infty)=I.
\end{array}\right.\notag
\end{gather}
In order to isolate the local contribution of the stationary phase points and remove the endpoint singularities generated by $D(k)$, we introduce small disks around $k=\pm k_0$ and further deform the contours inside these neighbourhoods. The division of regions around $k=\pm k_0$ is presented in Fig. \ref{fi11}, and the new jump contour is displayed in Fig. \ref{fi12} as below.

\begin{figure}[ht]
\centering
  \includegraphics[width=0.9\linewidth]{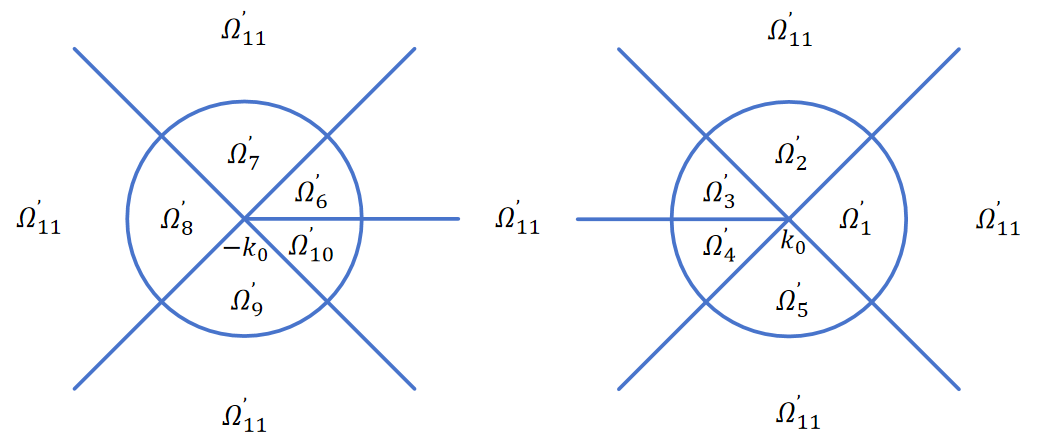}
  \caption{The division of regions around the stationary phase points.}
\label{fi11}
\end{figure}

\begin{figure}[ht]
\centering
  \includegraphics[width=0.85\linewidth]{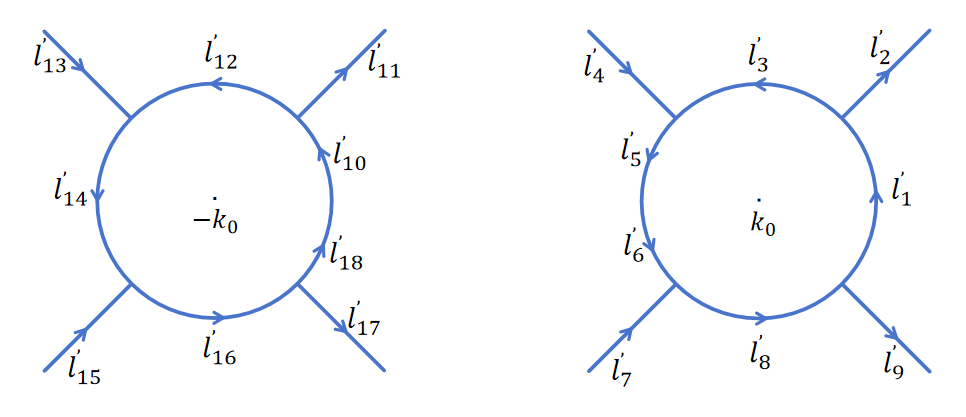}
  \caption{The jump contours around the stationary phase points of Riemann-Hilbert problem II}
\label{fi12}
\end{figure}

We define a new matrix function $M^{(2)}(x,t;k)$ on these eleven regions $\Omega^{'}_1$, $\Omega^{'}_2$, ..., $\Omega^{'}_{11}$,
\begin{gather}\label{M2}
M^{(2)}(x,t;k)=M^{(1)}(x,t;k)\left\{\begin{array}{l}
L^{-1}A,\ \ \quad k\in \Omega^{'}_1\cup\Omega^{'}_8,\\
C^{-1}B^{-1},\ \  k\in \Omega^{'}_2\cup\Omega^{'}_7,\\
B^{-1},\ \ \ \ \quad k\in \Omega^{'}_3\cup\Omega^{'}_6,\\
I,\ \ \quad \quad \quad k\in \Omega^{'}_4\cup\Omega^{'}_{10},\\
A,\ \ \ \ \quad \quad k\in \Omega^{'}_5\cup\Omega^{'}_9,\\
D^{-1},\ \quad \quad k\in \Omega^{'}_{11}.
\end{array}\right.
\end{gather}
Then, we construct the following Riemann-Hilbert problem II of the matrix function $M^{(2)}(x,t;k)$ based on the above relationship (\ref{M2}).

\noindent\textbf{Riemann-Hilbert problem II}
\begin{itemize}
\item $M^{(2)}(x,t;k)$ is analytic in $\mathbb{C}\setminus(l^{'}_1\cup ...\cup l^{'}_{18})$;
\item $M^{(2)}_{+}(x,t;k)=M^{(2)}_{-}(x,t;k)G^{(2)}(x,t;k),\ k\in l^{'}_1\cup ...\cup l^{'}_{18},$

where the jump matrix $G^{(2)}(x,t;k)$ is
\begin{gather}
G^{(2)}(x,t;k)=\left\{\begin{array}{l}
DL^{-1}A,\ \quad k\in l^{'}_1\cup l^{'}_{14},\\
DUD^{-1},\ \quad k\in l^{'}_2\cup l^{'}_{13},\\
DC^{-1}B^{-1},\  k\in l^{'}_3\cup l^{'}_{12},\\
DCD^{-1},\ \quad k\in l^{'}_4\cup l^{'}_{11},\\
DB^{-1},\ \ \ \quad k\in l^{'}_5\cup l^{'}_{10},\\
D,\quad \quad \quad \quad k\in l^{'}_6\cup l^{'}_{18},\\
DAD^{-1},\ \quad k\in l^{'}_7\cup l^{'}_{17},\\
DA,\ \ \ \quad \quad k\in l^{'}_8\cup l^{'}_{16},\\
DLD^{-1},\ \quad  k\in l^{'}_9\cup l^{'}_{15};
\end{array}\right.\notag
\end{gather}
\item $M^{(2)}(x,t;k)\rightarrow I,\ k\rightarrow\infty$.
\end{itemize}
The jump structure of $M^{(2)}(x,t;k)$ meets the requirements for solving the Riemann-Hilbert problem numerically.

\subsection{The Painlev\'{e} transition region}

In the Painlev\'{e} transition region, for $x>0$, the phase function $\psi(k)$ has no real stationary points, thus the deformation of the Riemann-Hilbert problem is similar to that in the fast decay region. We consider the case $x<0$ with $-x=O(t^{\frac{1}{3}})$, then the two real stationary points of the phase function $\psi(k)$ coalesce as $t\rightarrow\infty$. The local model used in the standard dispersive region is no longer valid, and a Painlev\'{e}-type local analysis is adopted instead. Accordingly, the deformation of Riemann-Hilbert problem in this region is simplified, with fewer jump contours than in the dispersive region.

On the area between the two stationary points $k_0$ and $-k_0$, the matrix factorization
\begin{gather}
  G(x,t;k)=A(x,t;k)B(x,t;k)C(x,t;k)\notag
\end{gather}
and double lens opening used in the dispersive region are no longer necessary. This is because, in this region, $|k|\leq\sqrt{\frac{c_1}{12}}t^{-\frac{1}{3}}$, $c_1>0$. Thus,
\begin{gather}
  |\psi(k)|=|2kx+8k^3t|\leq2|k||x|+8t|k|^3\leq\frac{4\sqrt{3}}{9}c_1^{\frac{3}{2}},\notag
\end{gather}
which means that $|\psi(k)|$ remains uniformly bounded and does not grow with $t$. As a consequence, the exponential factors in the jump matrixes no longer exhibit strong oscillatory behavior that appears in the dispersive region. Then, we restart form the original Riemann-Hilbert problem and employ only the simplified lens deformation associated with the matrix factorization $G(x,t;k)=L(x,t;k)U(x,t;k)$. Accordingly, we partition the complex-$k$ plane as Fig. \ref{fi13} and introduce the corresponding definition of transformed matrix $M^{(3)}(x,t;k)$
\begin{gather}\label{M3}
  M^{(3)}(x,t;k)=M(x,t;k)\left\{\begin{array}{l}
U^{-1},\quad k\in \Omega_1^{''}\cup\Omega_3^{''},\\
I,\ \ \ \quad k\in \Omega_2^{''}\cup\Omega_5^{''},\\
L,\ \ \ \quad k\in \Omega_4^{''}\cup\Omega_6^{''}.
\end{array}\right.
\end{gather}
And the matrix function $M^{(3)}(x,t;k)$ satisfies the following jump conditions
\begin{gather}\label{jumpM3}
  M^{(3)}_+(x,t;k)=M^{(3)}_-(x,t;k)\left\{\begin{array}{l}
U,\quad k\in l_1^{''}\cup l_3^{''},\\
G,\quad k\in l_2^{''},\\
L,\quad k\in l_4^{''}\cup l_5^{''}.
\end{array}\right.
\end{gather}

\begin{figure}[ht]
\centering
  \includegraphics[width=0.95\linewidth]{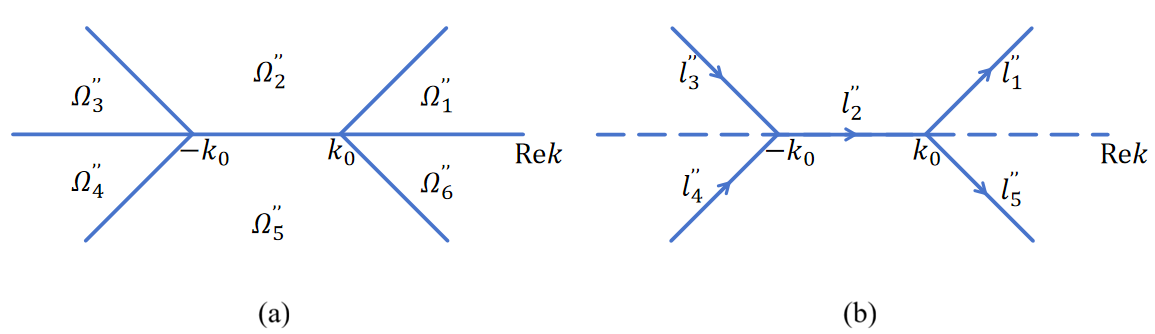}
  \caption{The division of regions and jump contours of Riemann-Hilbert problem III.}
\label{fi13}
\end{figure}

We obtain the following Riemann-Hilbert problem III of the matrix function $M^{(3)}(x,t;k)$ based on the definition (\ref{M3}) and the jump condition (\ref{jumpM3}).

\noindent\textbf{Riemann-Hilbert problem III}
\begin{itemize}
\item $M^{(3)}(x,t;k)$ is analytic in $\mathbb{C}\setminus(l^{''}_1\cup ...\cup l^{''}_5)$;
\item $M^{(3)}_{+}(x,t;k)=M^{(3)}_{-}(x,t;k)G^{(3)}(x,t;k),\ k\in l^{''}_1\cup ...\cup l^{''}_5,$

where the jump matrix $G^{(3)}(x,t;k)$ is
\begin{gather}
G^{(3)}(x,t;k)=\left\{\begin{array}{l}
U,\quad k\in l_1^{''}\cup l_3^{''},\\
G,\quad k\in l_2^{''},\\
L,\quad k\in l_4^{''}\cup l_5^{''};
\end{array}\right.\notag
\end{gather}
\item $M^{(3)}(x,t;k)\rightarrow I,\ k\rightarrow\infty$.
\end{itemize}

\subsection{The fast decay region}

In the fast decay region, the phase function $\psi(k)$ has no real stationary points, and the sign of $\operatorname{Re}\psi(k)$ in the complex-$k$ plane is shown as Fig. \ref{fi9}. We introduce $\beta$ so that $0<\beta<\sqrt{3}|k_0|$, and the points $\pm i \beta$ lie in the upper and lower half planes respectively. We use these points as anchor points to introduce two families of suitable ray contours in the upper and lower half-planes presented in Fig. \ref{fi14}, and the corresponding jump factors are continued into regions where they decay exponentially.

\begin{figure}[ht]
\centering
  \includegraphics[width=0.45\linewidth]{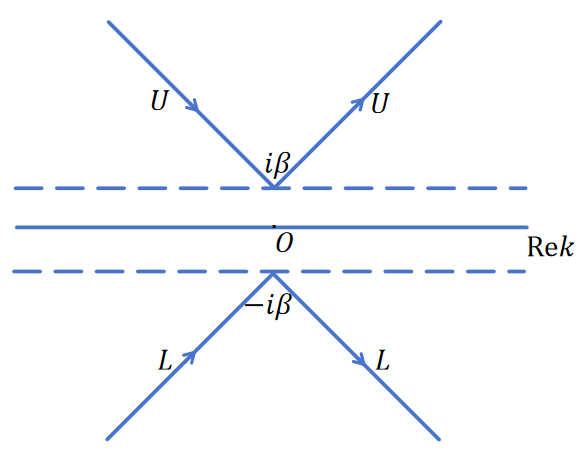}
  \caption{The jump contours and jump factors of Riemann-Hilbert problem IV.}
\label{fi14}
\end{figure}

We denote the two ray contours in the upper half-plane as $\Sigma^{(u)}$, and the two ray contours in the lower half-plane as $\Sigma^{(l)}$. Then, we construct the Riemann-Hilbert problem IV satisfied by a new matrix function $M^{(4)}(x,t;k)$.

\noindent\textbf{Riemann-Hilbert problem IV}
\begin{itemize}
\item $M^{(4)}(x,t;k)$ is analytic in $\mathbb{C}\setminus(\Sigma^{(u)}\cup \Sigma^{(l)})$;
\item $M^{(4)}_{+}(x,t;k)=M^{(4)}_{-}(x,t;k)G^{(4)}(x,t;k),\ k\in \Sigma^{(u)}\cup \Sigma^{(l)},$

where the jump matrix $G^{(4)}(x,t;k)$ is
\begin{gather}
G^{(4)}(x,t;k)=\left\{\begin{array}{l}
U,\quad k\in \Sigma^{(u)},\\
L,\quad k\in \Sigma^{(l)};
\end{array}\right.\notag
\end{gather}
\item $M^{(4)}(x,t;k)\rightarrow I,\ k\rightarrow\infty$.
\end{itemize}

\section{Numerical inverse scattering }\label{sec5}

Based on the above analysis and deformation of the oscillatory Riemann-Hilbert problem, we obtain a deformed Riemann-Hilbert problem with non-oscillatory or weakly oscillatory jump matrixes, which is suitable for numerical computation. This forms the basis for numerically studying the long-time evolution for solutions in this section. To solve the deformed Riemann-Hilbert problem effectively, we employ the Chebyshev collocation method and Olver's numerical method \citep{Olver2011,Olver2012}. First, the given Riemann-Hilbert problem exists the following form
\begin{equation}\label{RHP}
\begin{split}
&\Psi_{+}(k)=\Psi_{-}(k)G(k),\quad k\in l,\\
&\Psi(\infty)=I,
\end{split}
\end{equation}
where $l$ is the jump curve in the Riemann-Hilbert problem, and $G(k)$ is the jump matrix on jump curve. Furthermore, the jump contours $l_1$, $l_2$, ..., $l_m$ together form the jump curve $l$, and each $l_m$ is a non-self-intersecting contour. Different jump branch $l_m$ corresponds to different $k_m\in l_m$, and $k=k_1\cup k_2\cup...\cup  k_m$. Then, we consider the Cauchy transformation
\begin{gather}\label{cauchy}
  C_l(f(k))=\frac{1}{2\pi i}\int_l\frac{f(\xi)}{\xi-k}d\xi,
\end{gather}
and the left and right limits of the Cauchy transformation (\ref{cauchy})
\begin{gather}
C_l^+(f(k))=\lim\limits_{\varepsilon\to0+}\frac{1}{2\pi i }\int_l\frac{f(\xi)}{\xi-(k+i \varepsilon)}d\xi,\notag\\
C_l^-(f(k))=\lim\limits_{\varepsilon\to0+}\frac{1}{2\pi i }\int_l\frac{f(\xi)}{\xi-(k-i \varepsilon)}d\xi.\notag
\end{gather}
The Cauchy operators $C_l^+$ and $C_l^-$ satisfy $C_l^+-C_l^-=I$, so that $f(k)=C_l^+(f(k))-C_l^-(f(k))$. The solution $\Psi(k)$ of the above Riemann-Hilbert problem (\ref{RHP}) is expressed by the Cauchy transformation
\begin{gather}\label{PsiV}
\Psi(k)=I+C_lV(k)=I+\frac{1}{2\pi i}\int_l\frac{V(\xi)}{\xi-k}d\xi.
\end{gather}
Thus, the matrix function $V(k)$ satisfies the following equation
\begin{gather}\label{v}
C_l^+V(k)-C_l^-V(k)G(k)=G(k)-I,
\end{gather}
which means that solving the Riemann-Hilbert problem (\ref{RHP}) is transformed into solving the equation (\ref{v}) about $V(k)$. For conducting the Chebyshev collocation method on the equation (\ref{v}), we import $n$ Chebyshev nodes $\boldsymbol{x}=(x_1,x_2,\ldots,x_n)^{\top}$ and $n$ Chebyshev polynomials $T_i(i=0,1,\ldots,n-1)$. Furthermore, we choose appropriate M\"{o}bius transformations $M_1$, $M_2$, ..., $M_m$, which map the unit interval $[-1,1]$ to the contours $l_1$, $l_2$, ..., $l_m$ of the jump curve, that is $M_m([-1,1])=l_m$. Therefore, the matrix function $V(k)$ defined over the entire jump curve $l=l_1\cup l_2\cup...\cup  l_m$ in equation (\ref{v}) can be written as
\begin{gather}\label{Vwritten}
  V(k)=(V_1(k),\ V_2(k),\ ...,\ V_m(k))^{\top},
\end{gather}
where $V_i=V|_{l_i}$ for $k\in l_i$, and each matrix function $V_i(k)$ is a $3\times 3$ matrix, that is
\begin{gather}\label{Vmatrix}
V_i(k)=\left(\begin{array}{ccc}
V_i^{(11)}(k) & V_i^{(12)}(k) & V_i^{(13)}(k) \\
V_i^{(21)}(k) & V_i^{(22)}(k) & V_i^{(23)}(k) \\
V_i^{(31)}(k) & V_i^{(32)}(k) & V_i^{(33)}(k)
\end{array}\right),\quad i=1,\ 2,\ \ldots,\  m,
\end{gather}
and
\begin{gather}
  V_i^{(mn)}(k)=(T_0(M_i^{-1}(k)),\ T_1(M_i^{-1}(k)),\ \cdots,\ T_{n-1}(M_i^{-1}(k))) \mathcal{F}\boldsymbol{v}_i^{(mn)}.\notag
\end{gather}
The $\boldsymbol{v}_i^{(mn)}$ is the column vector composed of unknown parameters. In order to determine these unknown parameters, we substitute the expressions (\ref{Vwritten}) and (\ref{Vmatrix}) into the equation (\ref{v}), and construct a $9\cdot m\cdot n\times9\cdot m\cdot n$ linear system as
\begin{gather}\label{linearsystem}
[C_+-\operatorname{rdiag}(G)C_-]V=\operatorname{diag}(G)-I,
\end{gather}
where $C_+$ and $C_-$ are the constructed Cauchy matrixes. Through solving the linear system (\ref{linearsystem}) numerically, we obtain the unknown parameters $\boldsymbol{v}_i^{(mn)}$ and derive the matrix function $V(k)$.

We consider that the expression of $\Psi(k)$ is
\begin{gather}
\Psi(k)=I+C_lV(k)=I+C_{l_1}V_1(k)+C_{l_2}V_2(k)+\cdots+C_{l_m}V_m(k),\notag
\end{gather}
then we obtian
\begin{gather}
\begin{aligned}
\lim\limits_{k\to\infty}k\Psi^{(12)}(k)&=\lim\limits_{k\to\infty}\frac{1}{2\pi i }\int_l\frac{k}{\xi-k}V^{(12)}(\xi)d\xi\\
&=-\frac{1}{2\pi i }\left[\int_{l_1}V_1^{(12)}(\xi)d\xi+...+\int_{l_m}V_m^{(12)}(\xi)d\xi\right],
\end{aligned}
\end{gather}
\begin{gather}
\begin{aligned}
\lim\limits_{k\to\infty}k\Psi^{(13)}(k)&=\lim\limits_{k\to\infty}\frac{1}{2\pi i }\int_l\frac{k}{\xi-k}V^{(13)}(\xi)d\xi\\
&=-\frac{1}{2\pi i }\left[\int_{l_1}V_1^{(13)}(\xi)d\xi+...+\int_{l_m}V_m^{(13)}(\xi)d\xi\right],
\end{aligned}
\end{gather}
thus, we provide the overall framework for numerically solving the Riemann-Hilbert problem associated with the coupled mKdV equation (\ref{cmKdV}) by means of the NIST method. In addition to implementing the NIST approach, it is also natural to compare this method with the traditional numerical method, such as the Fourier spectral method (FSM). The FSM provides an effective way to solve the coupled mKdV equation (\ref{cmKdV}) on a truncated interval $[-L,\ L]$ under the periodic boundary conditions
\begin{gather}
  u(-L,t)=u(L,t),\quad v(-L,t)=v(L,t).\notag
\end{gather}
For the case $\alpha=1$, we take the Fourier transform of the coupled mKdV equation (\ref{cmKdV})
\begin{gather}\label{Fourier}
\hat{u}_t=i k^3\hat{u}+\operatorname{F}[N_u],\quad \hat{v}_t=i k^3\hat{v}+\operatorname{F}[N_v],
\end{gather}
where
\begin{gather}
N_u=3(u_xv^2+uvv_x+2u^2u_x),\quad N_v=3(u^2v_x+uu_xv+2v^2v_x),\notag\\
u_x=\operatorname{F}^{-1}(i k\hat{u}),\quad v_x=\operatorname{F}^{-1}(i k\hat{v}).\notag
\end{gather}
We multiply $e^{-i k^3t}$ to these expressions in (\ref{Fourier}), and then the following equations are obtained as
\begin{gather}\label{Fourier1}
(e^{-i k^3t}\hat{u})_t=e^{-i k^3t}\operatorname{F}[N_u],\quad (e^{-i k^3t}\hat{v})_t=e^{-i k^3t}\operatorname{F}[N_v],
\end{gather}
where $\hat{u}$ and $\hat{v}$ represent the Fourier transform of functions $u$ and $v$ respectively, $\operatorname{F}[\ \cdot\ ]$ denotes the Fourier transform, and $\operatorname{F}^{-1}[\ \cdot\ ]$ denotes the inverse Fourier transform. The explicit adaptive Runge-Kutta method, such as ode45, is employed to solve the equation (\ref{Fourier1}), and the solutions $u$ and $v$ can be derived with the inverse Fourier transform of the $\hat{u}$ and $\hat{v}$. For the cubic nonlinear terms, the 2/3-rule dealiasing can be further employed to reduce the aliasing error.

The previously constructed NIST framework is first applied to the coupled mKdV equation (\ref{cmKdV}) with the initial profiles $u_0=0.5\operatorname{sech}(x)$ and $v_0=1.5\operatorname{exp}(-x^2)$. For short times, no contour deformation is required in the associated Riemann-Hilbert problem, and the corresponding numerical solutions $u$ and $v$ are presented in Fig. \ref{fi15} (a) and (b), respectively. The results clearly show the spatiotemporal evolution of the solutions in the short-time regime. In particular, compared with existing numerical studies for scalar defocusing mKdV equations \citep{Trogdon2012,Lin2023,Wang2022}, the present computation reveals nontrivial wave patterns for the coupled mKdV equation obtained via the NIST framework. To assess the accuracy of the NIST approach, we compare the NIST results with those obtained from the FSM. The comparisons at $t=0.1$, $t=0.3$ and $t=0.5$ are displayed in Figs. \ref{fi16}-\ref{fi18}, respectively, where the blue solid curves correspond to the FSM and the red dotted curves correspond to the NIST. The two sets of results agree very well in the short-time regime, and the absolute errors remain below $10^{-5}$. This indicates that both methods are reliable for short-time computations.

\begin{figure}[ht]
\centering
  \includegraphics[width=0.85\linewidth]{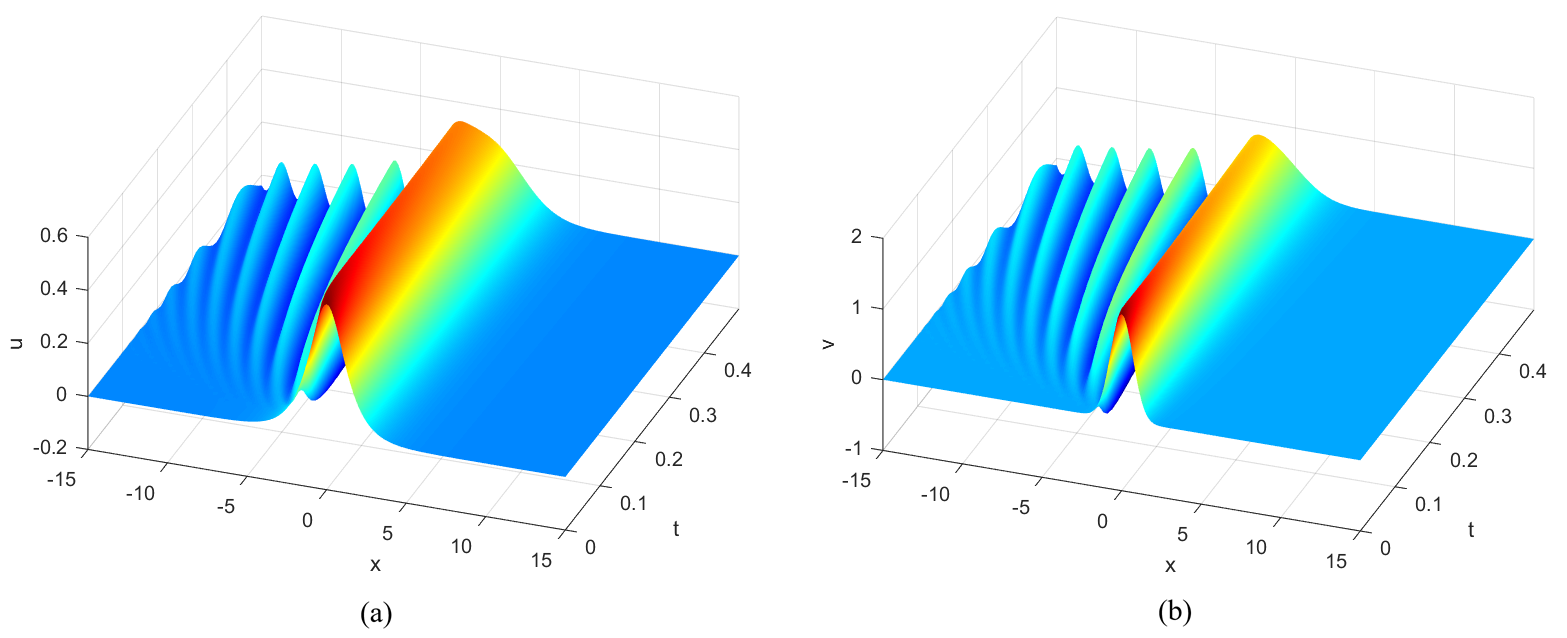}
  \caption{The spatiotemporal evolution of $u$ and $v$ computed by the NIST for initial profiles $u_0=0.5\operatorname{sech}(x)$ and $v_0=1.5\operatorname{exp}(-x^2)$ in the short time.}
\label{fi15}
\end{figure}

\begin{figure}[ht]
\centering
  \includegraphics[width=0.8\linewidth]{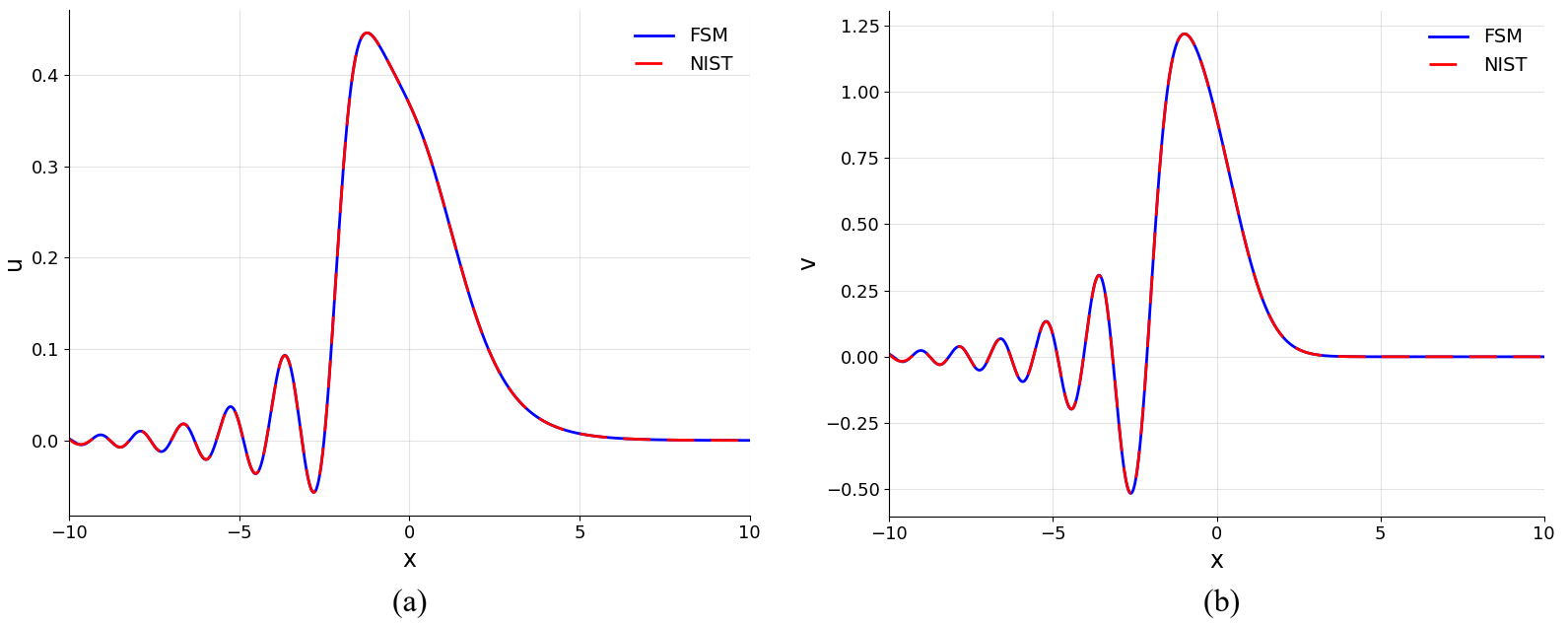}
  \caption{Comparison of the numerical results computed by NIST and FSM at $t=0.1$: (a) $u(x,t)$; (b) $v(x,t)$.}
\label{fi16}
\end{figure}

\begin{figure}[ht]
\centering
  \includegraphics[width=0.8\linewidth]{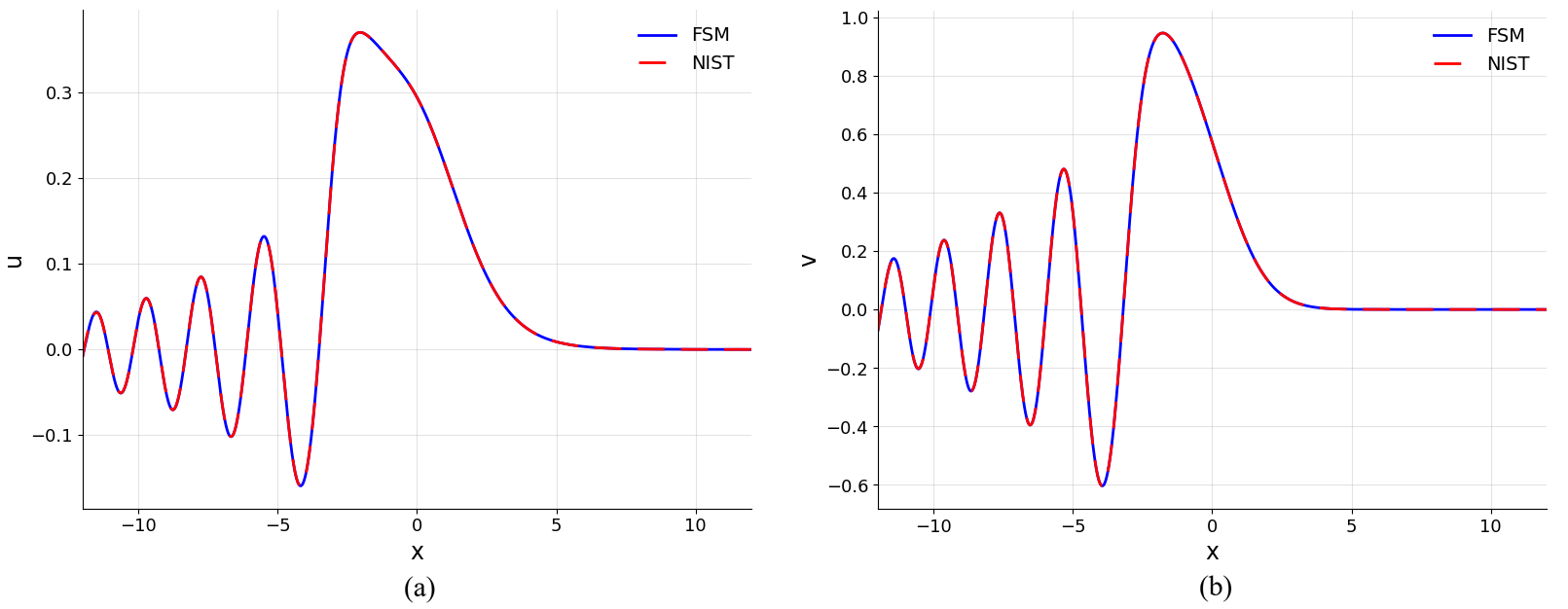}
  \caption{Comparison of the numerical results computed by NIST and FSM at $t=0.3$: (a) $u(x,t)$; (b) $v(x,t)$.}
\label{fi17}
\end{figure}

\begin{figure}[ht]
\centering
  \includegraphics[width=0.8\linewidth]{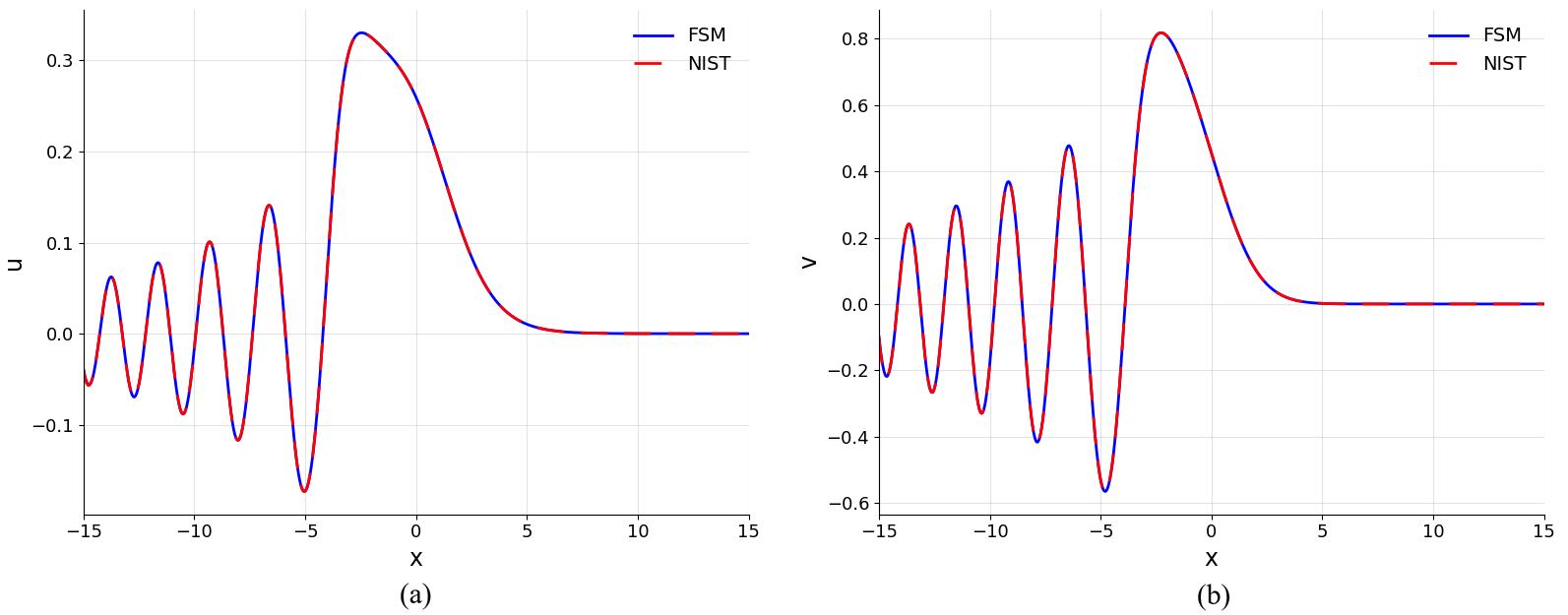}
  \caption{Comparison of the numerical results computed by NIST and FSM at $t=0.5$: (a) $u(x,t)$; (b) $v(x,t)$.}
\label{fi18}
\end{figure}

To further illustrate the advantage of the NIST over the FSM as time increases, we compare the numerical results of these two methods at $t=10$, $t=15$ and $t=20$, as shown in Fig. \ref{fi19}. In the figure, the blue curves represent the calculated results of the FSM and the red dotted curves denote the NIST. In terms of the profiles of $u$ and $v$, the NIST results remain smooth and stable, whereas the FSM results exhibit increasingly pronounced oscillations. It is observed that the oscillations of the FSM solution become increasingly severe as time grows, especially when the waves approach the boundaries of the truncated computational interval. Hence, although the FSM performs well for short times, it is no longer reliable for long-time evolution. In contrast, the NIST remains smooth and stable numerical results, showing its effectiveness for numerical computation of the coupled mKdV equation (\ref{cmKdV}) when the time is relatively large.

\begin{figure}[ht]
\centering
  \includegraphics[width=0.99\linewidth]{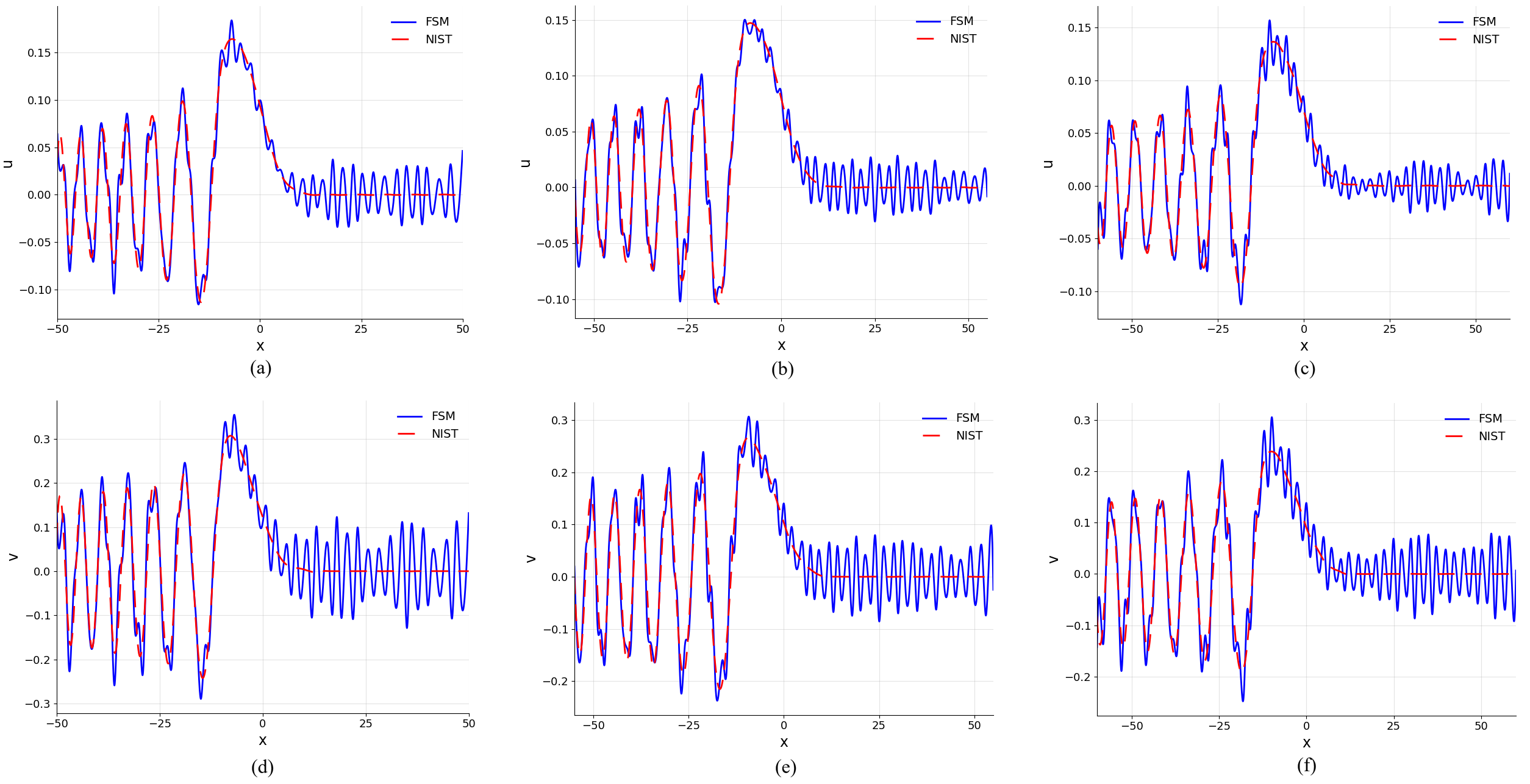}
  \caption{Comparison of the numerical solutions computed by NIST and FSM. (a)-(c): $u(x,t)$ at $t=10$, $t=15$ and $t=20$; (d)-(f): $v(x,t)$ at $t=10$, $t=15$ and $t=20$.}
\label{fi19}
\end{figure}

The NIST can also be effectively applied to compute the long-time evolution of the coupled mKdV equation (\ref{cmKdV}). The numerical solutions $u$ and $v$ at $t=50$, $t=100$, and $t=150$ are shown in Figs. \ref{fi20}-\ref{fi22}. The computational cost of the NIST does not increase with time, and the method remains stable in the long-time regime. As far as we know, traditional numerical methods, such as the FSM, are difficult to calculate the long-term evolution of solutions, however these numerical results indicate that the NIST provides an effective tool for studying with the long-time period.

Moreover, the computed profiles further reveal clear qualitative features of the solutions. As time evolves, the principal peak propagates to the left, namely in the negative $x$-direction, while its amplitude gradually decreases.  Meanwhile, a pronounced oscillatory tail develops on the left side of the main peak, and the local oscillation amplitude gradually decreases away from the peak. Thus, the NIST is able not only to reliably compute solutions for large times, but also to successfully capture their essential asymptotic structure.

\begin{figure}[ht]
\centering
  \includegraphics[width=0.85\linewidth]{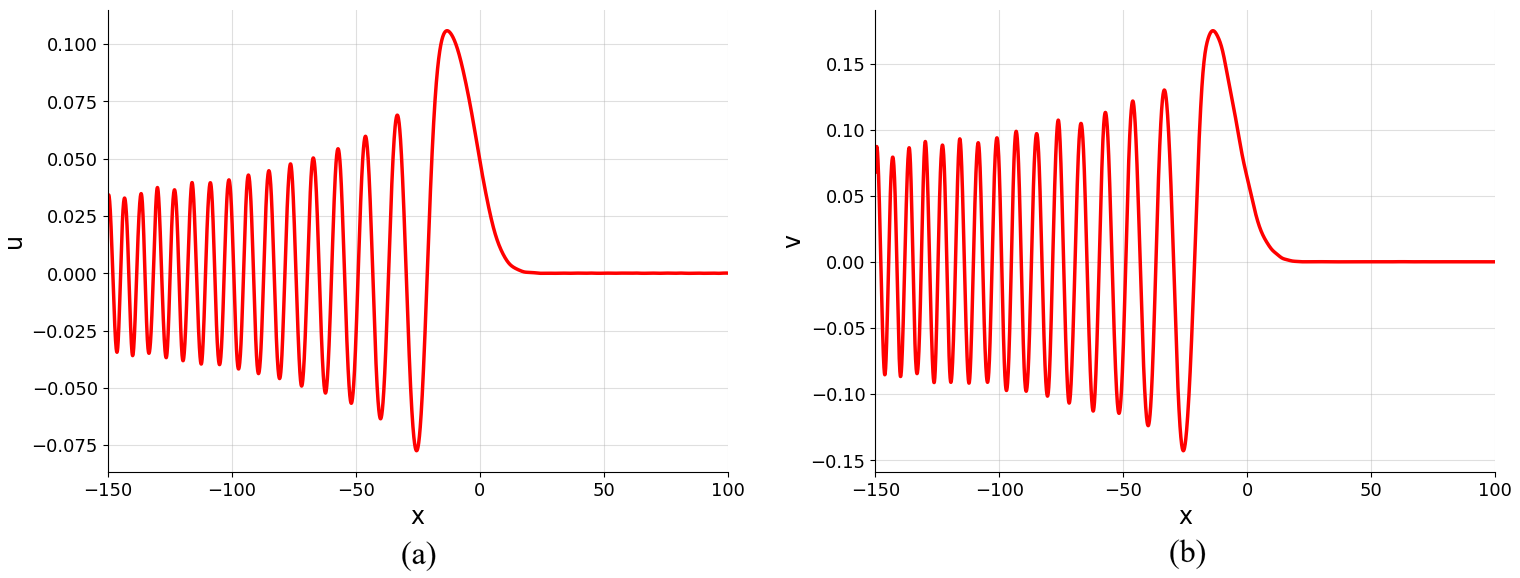}
  \caption{Long-time numerical profiles computed by the NIST method at $t=50$: (a) $u(x,t)$; (b) $v(x,t)$.}
\label{fi20}
\end{figure}

\begin{figure}[ht]
\centering
  \includegraphics[width=0.85\linewidth]{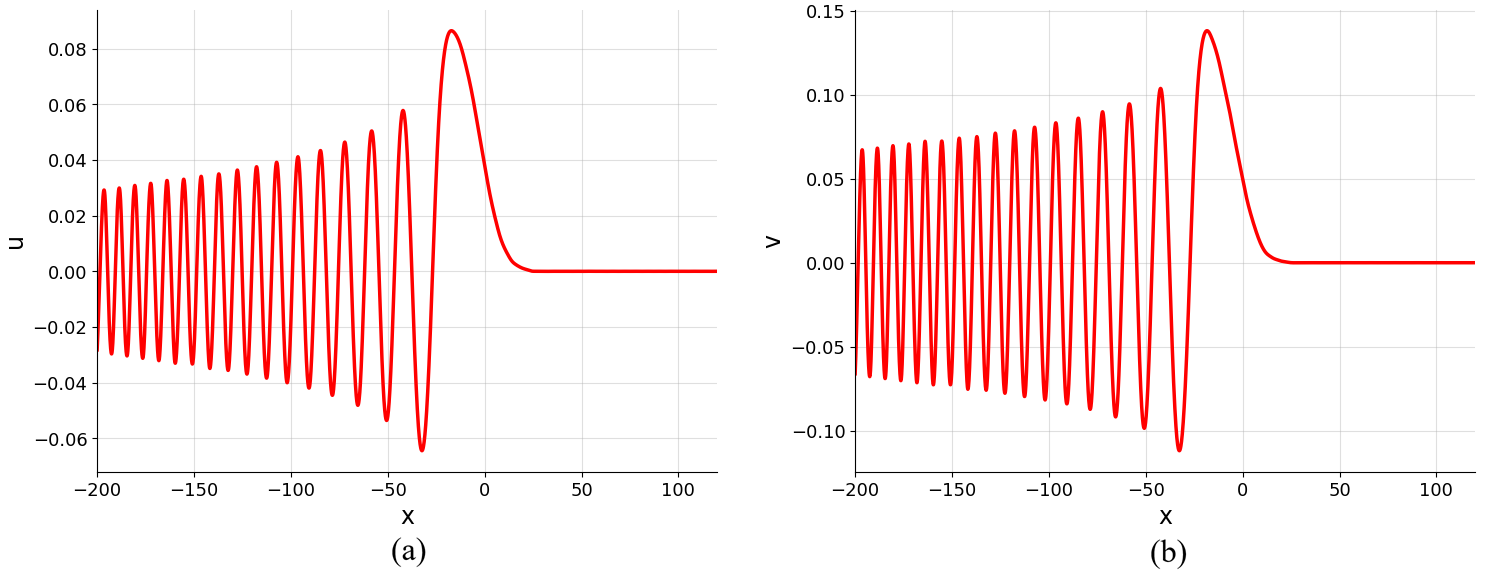}
  \caption{Long-time numerical profiles computed by the NIST method at $t=100$: (a) $u(x,t)$; (b) $v(x,t)$.}
\label{fi21}
\end{figure}

\begin{figure}[ht]
\centering
  \includegraphics[width=0.85\linewidth]{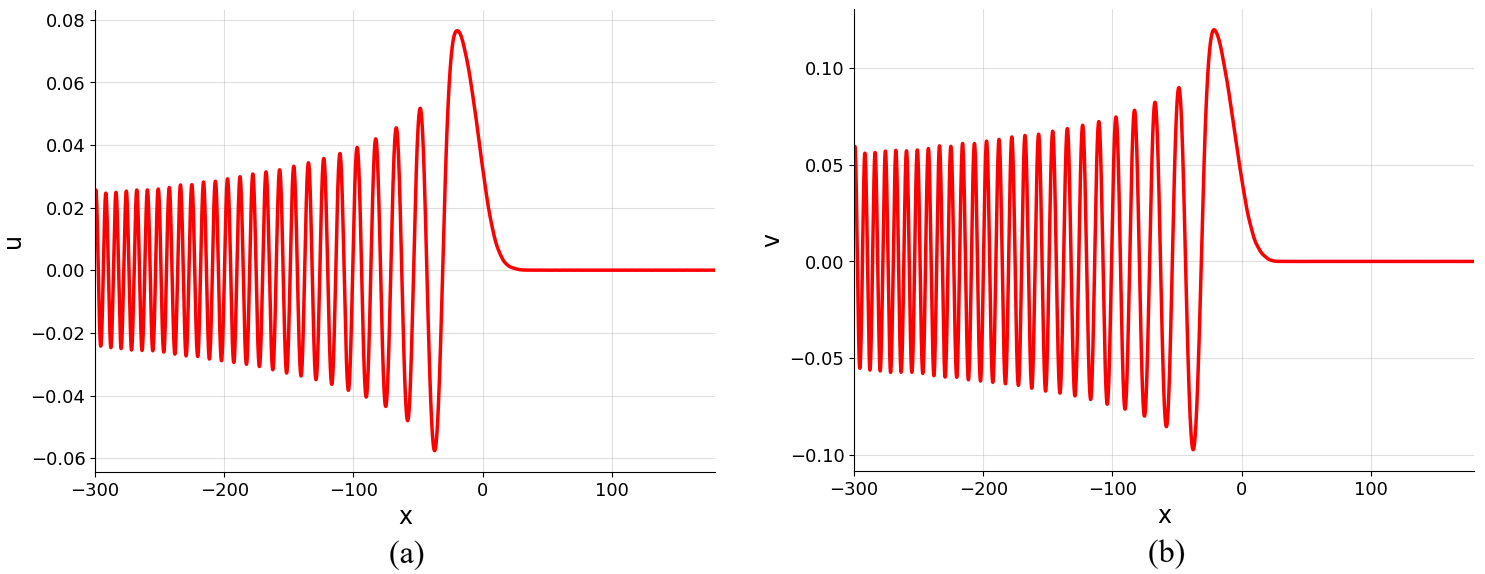}
  \caption{Long-time numerical profiles computed by the NIST method at $t=150$: (a) $u(x,t)$; (b) $v(x,t)$.}
\label{fi22}
\end{figure}


\section{Conclusions}\label{sec6}

In this paper, we develop the NIST framework for the coupled mKdV equation with initial data in the Schwartz space $S(\mathbb{R})$. Unlike traditional numerical methods, the NIST computes the solutions directly at prescribed spatial and temporal points through the associated Riemann-Hilbert problem, without relying on time-stepping procedures. Before implementing the NIST, two preparatory steps are needed. First, based on the Lax pair, we carry out the spectral analysis and construct the Riemann-Hilbert problem with a $3\times3$ jump matrix. For the direct problem, the scattering data are computed with high accuracy by using the Chebyshev collocation method together with appropriate mapping functions. The convergence analysis shows that the resulting numerical scheme is stable and convergent. Second, to deal with the oscillatory Riemann-Hilbert problem, we introduce factorizations of the jump matrix and contour deformations based on the Deift-Zhou nonlinear steepest descent method. In particular, the $(x,t)$-plane is divided into three regions to construct suitable contour deformations. On this basis, the NIST is implemented to solve the Riemann-Hilbert problem numerically, and its performance is compared with that of the FSM. The numerical results demonstrate that the NIST is effective and accurate for the computation of the coupled mKdV equation, especially in the long-time regime. Overall, the present work extends the NIST framework to the coupled mKdV system and provides an effective numerical approach for studying its solution dynamics through the associated Riemann-Hilbert problem.
\\

\noindent\textbf{Author Contributions} All authors contributed to the research and preparation of this work.\\

\noindent\textbf{Funding} This work is supported by National Natural Science Foundation of China (Nos. 12575002 and 12235007), Science and Technology Commission of Shanghai Municipality (Nos. 21JC1402500 and 22DZ2229014).\\

\noindent\textbf{Data Availability} Data will be made available on request.

\section*{Declarations}

\noindent\textbf{Conflict of interest} The authors declare no conflict of interest.


\bibliography{sn-bibliography}

@article{Wadati1973,
  author		= "Wadati, M.",
  title			= "The modified {K}orteweg-de {V}ries equation",
  journal		= "J. Phys. Soc. Jpn.",
  volume		= "34",
  number		= "5",
  pages			= "1289--1296",
  year			= "1973"
}

@article{Miura1968,
  author		= "Miura, R. M.",
  title			= "Korteweg-de {V}ries equation and generalizations. {I}. {A} remarkable explicit nonlinear transformation",
  journal		= "J. Math. Phys.",
  volume		= "9",
  number		= "8",
  pages			= "1202--1204",
  year			= "1968"
}

@article{Grimshaw1997,
  author		= "Grimshaw, R. and Pelinovsky, E. and Talipova, T.",
  title			= "The modified {K}orteweg-de {V}ries equation in the theory of large-amplitude internal waves",
  journal		= "Nonlinear Process. Geophys.",
  volume		= "4",
  number		= "4",
  pages			= "237--250",
  year			= "1997"
}

@article{Mel’nikov1997,
  author		= "Mel’nikov, I. V. and Mihalache, D. and Moldoveanu, F. and Panoiu, N. C.",
  title			= "Quasiadiabatic following of femtosecond optical pulses in a weakly excited semiconductor",
  journal		= "Phys. Rev. A",
  volume		= "56",
  number		= "2",
  pages			= "1569",
  year			= "1997"
}

@article{Leblond2018,
  author		= "Leblond, H. and Mihalache, D.",
  title			= "Ultrashort spatiotemporal optical solitons in waveguide arrays: the effect of combined linear and nonlinear couplings",
  journal		= "J. Phys. A: Math. Theor.",
  volume		= "51",
  number		= "43",
  pages			= "435202",
  year			= "2018"
}

@article{Khater1998,
  author		= "Khater, A. H. and El-Kalaawy, O. H. and Callebaut, D. K.",
  title			= {B\"{a}cklund transformations and exact solutions for {A}lfv\'{e}n solitons in a relativistic electron-positron plasma},
  journal		= "Phys. Scr.",
  volume		= "58",
  number		= "6",
  pages			= "545--548",
  year			= "1998"
}

@article{Ono1992,
  author		= "Ono, H.",
  title			= "Soliton fission in anharmonic lattices with reflectionless inhomogeneity",
  journal		= "J. Phys. Soc. Jpn.",
  volume		= "61",
  number		= "12",
  pages			= "4336--4343",
  year			= "1992"
}

@article{Helal2002,
  author		= "Helal, M. A.",
  title			= "Soliton solution of some nonlinear partial differential equations and its applications in fluid mechanics",
  journal		= "Chaos Solitons Fractals",
  volume		= "13",
  number		= "9",
  pages			= "1917--1929",
  year			= "2002"
}

@article{Yajima1975,
  author		= "Yajima, N. and Oikawa, M.",
  title			= "A class of exactly solvable nonlinear evolution equations",
  journal		= "Prog. Theor. Phys.",
  volume		= "54",
  number		= "5",
  pages			= "1576--1577",
  year			= "1975"
}

@article{Athorne1987,
  author		= "Athorne, C. and Fordy, A.",
  title			= "Generalised {K}d{V} and {M}{K}d{V} equations associated with symmetric spaces",
  journal		= "J. Phys. A: Math. Gen.",
  volume		= "20",
  number		= "6",
  pages			= "1377--1386",
  year			= "1987"
}

@article{Hirota1997,
  author		= "Hirota, R.",
  title			= {``Molecule solutions" of coupled modified {K}d{V} equations},
  journal		= "J. Phys. Soc. Jpn.",
  volume		= "66",
  number		= "9",
  pages			= "2530--2532",
  year			= "1997"
}

@article{Iwao1997,
  author		= "Iwao, M. and Hirota, R.",
  title			= "Soliton solutions of a coupled modified {K}d{V} equations",
  journal		= "J. Phys. Soc. Jpn.",
  volume		= "66",
  number		= "3",
  pages			= "577--588",
  year			= "1997"
}

@article{Tsuchida1998,
  author		= "Tsuchida, T. and Wadati, M.",
  title			= "The coupled modified {K}orteweg-de {V}ries equations",
  journal		= " J. Phys. Soc. Jpn.",
  volume		= "67",
  number		= "4",
  pages			= "1175--1187",
  year			= "1998"
}

@article{Wu2017,
  author		= "Wu, J. and Geng, X.",
  title			= "Inverse scattering transform and soliton classification of the coupled modified {K}orteweg-de {V}ries equation",
  journal		= "Commun. Nonlinear Sci. Numer. Simul.",
  volume		= "53",
  pages			= "83--93",
  year			= "2017"
}

@article{Ma2019,
  author		= "Ma, W. X.",
  title			= "The inverse scattering transform and soliton solutions of a combined modified {K}orteweg-de {V}ries equation",
  journal		= "J. Math. Anal. Appl.",
  volume		= "471",
  number		= "1--2",
  pages			= "796--811",
  year			= "2019"
}

@article{Xue2015,
  author		= "Xue, B. and Li, F. and Yang, G.",
  title			= "Explicit solutions and conservation laws of the coupled modified {K}orteweg-de {V}ries equation",
  journal		= "Phys. Scr.",
  volume		= "90",
  number		= "8",
  pages			= "085204",
  year			= "2015"
}

@article{Hu2009,
  author		= "Hu, H. C.",
  title			= "Analytical positon, negaton and complexiton solutions for the coupled modified {K}d{V} system",
  journal		= "J. Phys. A: Math. Theor.",
  volume		= "42",
  number		= "18",
  pages			= "185207",
  year			= "2009"
}

@article{Tian2018,
  author		= "Tian, S. F.",
  title			= "Initial-boundary value problems for the coupled modified {K}orteweg-de {V}ries equation on the interval",
  journal		= "Commun. Pure Appl. Anal",
  volume		= "17",
  number		= "3",
  pages			= "923--957",
  year			= "2018"
}

@article{Ma2018,
  author		= "Ma, W. X.",
  title			= "Riemann-{H}ilbert problems and {N}-soliton solutions for a coupled m{K}d{V} system",
  journal		= "J. Geom. Phys.",
  volume		= "132",
  pages			= "45--54",
  year			= "2018"
}

@article{Xu2023,
  author		= "Xu, S.",
  title			= "A coupled complex m{K}d{V} equation and its {N}-soliton solutions via the {R}iemann-{H}ilbert approach",
  journal		= "Bound. Value Probl.",
  volume		= "2023",
  number		= "1",
  pages			= "83",
  year			= "2023"
}

@article{Fan2001,
  author		= "Fan, E.",
  title			= "Soliton solutions for a generalized {H}irota-{S}atsuma coupled {K}d{V} equation and a coupled {M}{K}d{V} equation",
  journal		= "Phys. Lett. A",
  volume		= "282",
  number		= "1--2",
  pages			= "18--22",
  year			= "2001"
}

@article{Geng2014,
  author		= "Geng, X. and Zhai, Y. and Dai, H. H.",
  title			= "Algebro-geometric solutions of the coupled modified {K}orteweg-de {V}ries hierarchy",
  journal		= "Adv. Math.",
  volume		= "263",
  pages			= "123--153",
  year			= "2014"
}

@misc{Ling2025,
  author		= "Ling, L. and Su, H.",
  title			= "Nonlinear stability of vector multi-solitons in coupled {N}{L}{S} and modified {K}d{V} equations",
  year			= "2025",
  note			= "Preprint at \url{https://arxiv.org/abs/2510.12129}"
}

@article{Gardner1967,
  author		= "Gardner, C. S. and Greene, J. M. and Kruskal, M. D. and Miura, R. M.",
  title			= "Method for solving {K}orteweg-de {V}ries equation",
  journal		= "Phys. Rev. Lett.",
  volume		= "19",
  pages			= "1095--1097",
  year			= "1967"
}

@article{Ablowitz1974,
  author		= "Ablowitz, M. J. and Kaup, D. J. and Newell, A. C. and Segur, H.",
  title			= "The {I}nverse {S}cattering {T}ransform-{F}ourier {A}nalysis for {N}onlinear {P}roblems",
  journal		= "Stud. Appl. Math.",
  volume		= "53",
  number		= "4",
  pages			= "249--315",
  year			= "1974"
}

@book{Ablowitz1981,
  author		= "Ablowitz, M. J. and Segur, H.",
  title			= "Solitons and the inverse scattering transform",
  address		= "Philadelphia",
  publisher		= "S{I}{A}{M}",
  year			= "1981"
}

@book{Novikov1984,
  author		= "Novikov, S. P. and Manakov, S. V. and Pitaevskii, L. P. and Zakharov, V. E.",
  title			= "Theory of {S}olitons: {T}he {I}nverse {S}cattering {M}ethod",
  address		= "New {Y}ork",
  publisher		= "Consul-tants {B}ureau",
  year			= "1984"
}

@book{Yang2010,
  author		= "Yang, J. K.",
  title			= "Nonlinear {W}aves in {I}ntegrable and {N}onintegrable {S}ystems",
  address		= "Philadelphia",
  publisher		= "S{I}{A}{M}",
  year			= "2010"
}

@article{Deift1993,
  author		= "Deift, P. and Zhou, X.",
  title			= "A steepest descent method for oscillatory {R}iemann-{H}ilbert problems. {A}symptotics for the {M}{K}d{V} equation",
  journal		= "Ann. of Math.",
  volume		= "137",
  number		= "2",
  pages			= "295--368",
  year			= "1993"
}

@article{Deift1994,
  author		= "Deift, P. and Zhou, X.",
  title			= "Long-time asymptotics for integrable systems. {H}igher order theory",
  journal		= "Commun. Math. Phys.",
  volume		= "165",
  pages			= "175--191",
  year			= "1994"
}

@article{Grunert2009,
  author		= "Grunert, K. and Teschl, G.",
  title			= "Long-{T}ime {A}symptotics for the {K}orteweg-de {V}ries {E}quation via {N}onlinear {S}teepest {D}escent",
  journal		= "Math. Phys. Anal. Geom.",
  volume		= "12",
  pages			= "287--324",
  year			= "2009"
}

@article{Kamvissis1996,
  author		= "Kamvissis, S.",
  title			= {Long time behavior for the focusing nonlinear {S}chr\"{o}dinger equation with real spectral singularities},
  journal		= "Commun. Math. Phys.",
  volume		= "180",
  pages			= "325--341",
  year			= "1996"
}

@article{Buckingham2007,
  author		= "Buckingham, R. and Venakides, S.",
  title			= {Long-time asymptotics of the nonlinear {S}chr\"{o}dinger equation shock problem},
  journal		= "Commun. Pure Appl. Math.",
  volume		= "60",
  pages			= "1349--1414",
  year			= "2007"
}

@article{Xu2013,
  author		= "Xu, J. and Fan, E. and Chen, Y.",
  title			= {Long-time {A}symptotic for the {D}erivative {N}onlinear {S}chr\"{o}dinger {E}quation with {S}tep-like {I}nitial {V}alue},
  journal		= "Math. Phys. Anal. Geom.",
  volume		= "16",
  pages			= "253--288",
  year			= "2013"
}

@article{Cheng1999,
  author		= "Cheng, P. J. and Venakides, S. and Zhou, X.",
  title			= "Long-time asymptotics for the pure radiation solution of the sine-{G}ordon equation",
  journal		= "Comm. Partial Differential Equations.",
  volume		= "24",
  pages			= "1195--1262",
  year			= "1999"
}

@article{Monvel2009,
  author		= "De Monvel, A. B. and Kostenko, A. and Shepelsky, D. and Teschl, G.",
  title			= "Long-time asymptotics for the {C}amassa-{H}olm equation",
  journal		= "SIAM J. Math. Anal.",
  volume		= "41",
  pages			= "1559--1588",
  year			= "2009"
}

@article{Wang2018,
  author		= "Wang, D. S. and Wang, X. L.",
  title			= "Long-time asymptotics and the bright {N}-soliton solutions of the {K}undu-{E}ckhaus equation via the {R}iemann-{H}ilbert approach",
  journal		= "Nonlinear Anal. Real World Appl.",
  volume		= "41",
  pages			= "334--361",
  year			= "2018"
}

@article{Geng2019,
  author		= "Geng, X. and Chen, M. and Wang, K.",
  title			= "Long-time asymptotics of the coupled modified {K}orteweg-de {V}ries equation",
  journal		= "J. Geom. Phys.",
  volume		= "142",
  pages			= "151--167",
  year			= "2019"
}

@article{Ma2019-2,
  author		= "Ma, W. X.",
  title			= "Long-time asymptotics of a three-component coupled m{K}d{V} system",
  journal		= "Mathematics",
  volume		= "7",
  number		= "7",
  pages			= "573",
  year			= "2019"
}

@article{Trogdon2012,
  author		= "Trogdon, T. and Olver, S. and Deconinck, B.",
  title			= "Numerical inverse scattering for the {K}orteweg-de {V}ries and modified {K}orteweg-de {V}ries equations",
  journal		= "Phys. D",
  volume		= "241",
  pages			= "1003--1025",
  year			= "2012"
}

@article{Olver2011,
  author		= "Olver, S.",
  title			= {Numerical {S}olution of {R}iemann-{H}ilbert {P}roblems: {P}ainlev\'{e} {I}{I}},
  journal		= "Found. Comput. Math.",
  volume		= "11",
  pages			= "153--179",
  year			= "2011"
}

@article{Olver2012,
  author		= "Olver, S.",
  title			= "A general framework for solving {R}iemann-{H}ilbert problems numerically",
  journal		= "Numer. Math.",
  volume		= "122",
  pages			= "305--340",
  year			= "2012"
}

@book{Trogdon2016,
  author		= "Trogdon, T. and Olver, S.",
  title			= "Riemann-{H}ilbert problems, their numerical solution, and the computation of nonlinear special functions",
  address		= "Philadelphia",
  publisher		= "S{I}{A}{M}",
  year			= "2016"
}

@article{Trogdon2013,
  author		= "Trogdon, T. and Deconinck, B.",
  title			= "A {R}iemann-{H}ilbert problem for the finite-genus solutions of the {K}d{V} equation and its numerical solution",
  journal		= "Phys. D",
  volume		= "251",
  pages			= "1--18",
  year			= "2013"
}

@article{Bilman2020,
  author		= "Bilman, D. and Trogdon, T.",
  title			= "On numerical inverse scattering for the {K}orteweg-de {V}ries equation with discontinuous step-like data",
  journal		= "Nonlinearity",
  volume		= "33",
  pages			= "2211--2269",
  year			= "2020"
}

@article{Trogdon2013-2,
  author		= "Trogdon, T. and Olver, S.",
  title			= {Numerical inverse scattering for the focusing and defocusing nonlinear {S}chr\"{o}dinger equations},
  journal		= "Proc. A",
  volume		= "469",
  pages			= "20120330",
  year			= "2013"
}

@article{Gkogkou2026,
  author		= "Gkogkou, A. and Prinari, B. and Trogdon, T.",
  title			= {Numerical inverse scattering transform for the defocusing nonlinear {S}chr\"{o}dinger equation with box-type initial conditions on a nonzero background},
  journal		= "Journal of Nonlinear Waves",
  volume		= "2",
  pages			= "1--55",
  year			= "2026"
}

@article{Cui2024,
  author		= "Cui, S. and Wang, Z.",
  title			= {Numerical inverse scattering transform for the derivative nonlinear {S}chr\"{o}dinger equation},
  journal		= "Nonlinearity",
  volume		= "37",
  pages			= "105015",
  year			= "2024"
}

@article{Zhang2025,
  author		= "Zhang, W. X. and Chen, Y.",
  title			= {Numerical inverse scattering transform for the coupled nonlinear {S}chr\"{o}dinger equation},
  journal		= "Chaos Solitons Fractals",
  volume		= "201",
  pages			= "117185",
  year			= "2025"
}

@article{Cui2023,
  author		= "Cui, S. and Wang, Z.",
  title			= "Numerical inverse scattering transform for the focusing and defocusing {K}undu-{E}ckhaus equations",
  journal		= "Phys. D",
  volume		= "454",
  pages			= "133838",
  year			= "2023"
}

@article{Deconinck2019,
  author		= "Deconinck, B. and Trogdon, T. and Yang, X.",
  title			= "Numerical inverse scattering for the sine-{G}ordon equation",
  journal		= "Phys. D",
  volume		= "399",
  pages			= "159--172",
  year			= "2019"
}

@article{Bilman2017,
  author		= "Bilman, D. and Trogdon, T.",
  title			= "Numerical {I}nverse {S}cattering for the {T}oda {L}attice",
  journal		= "Commun. Math. Phys.",
  volume		= "352",
  pages			= "805--879",
  year			= "2017"
}

@article{Akyuz1999,
  author		= {Aky\"{u}z, A. and Sezer, M.},
  title			= "A {C}hebyshev collocation method for the solution of linear integro-differential equations",
  journal		= "Int. J. Comput. Math.",
  volume		= "72",
  number		= "4",
  pages			= "491--507",
  year			= "1999"
}

@article{Cui2024-2,
  author		= "Cui, S. and Wang, Z.",
  title			= "Efficient method to calculate the eigenvalues of the {Z}akharov-{S}habat system",
  journal		= "Chin. Phys. B",
  volume		= "33",
  pages			= "010201",
  year			= "2024"
}

@article{Lin2023,
  author		= "Lin, S. and Chen, Y.",
  title			= "Physics-informed neural network methods based on {M}iura transformations and discovery of new localized wave solutions",
  journal		= "Phys. D",
  volume		= "445",
  pages			= "133629",
  year			= "2023"
}

@article{Wang2022,
  author		= "Wang, D. S. and Xu, L. and Xuan, Z.",
  title			= "The complete classification of solutions to the {R}iemann problem of the defocusing complex modified {K}d{V} equation",
  journal		= "J. Nonlinear Sci.",
  volume		= "32",
  number		= "1",
  pages			= "3",
  year			= "2022"
}

\end{document}